\documentclass[10pt]{article}
\usepackage{tikz}
\usepackage{authblk}
\usepackage{graphicx}
\usepackage{amsfonts}
\usepackage{float}
\usepackage{mathtools}

\usepackage{subfigure}

\usepackage{amssymb,latexsym,amsmath,amscd,slashed,mathtools,mathrsfs,bm,stmaryrd,dsfont,amsthm}
\usepackage{setspace,scalerel,color,graphicx,enumerate}

\newcommand{\rr}{\mathbb{R}}
\newcommand{\cc}{\mathbb{C}}
\newcommand{\zz}{\mathbb{Z}}

\newcommand{\hh}{\mathbb{H}}
\newcommand{\dd}{\mathbb{D}}
\newcommand{\mcal}[1]{\mathcal{#1}}
\newcommand{\dif}{\mathrm{d}}

\newcommand{\re}[1]{\text{Re}(#1)}
\newcommand{\im}[1]{\text{Im}(#1)}

\newcommand{\trace}[1]{\text{Tr}(#1)}
\newcommand{\res}{\text{Res}}

\theoremstyle{plain}
\newtheorem{theorem}{Theorem}

\newtheorem{lemma}[theorem]{Lemma}
\newtheorem{prop}[theorem]{Proposition}

\theoremstyle{definition}
\newtheorem{definition}{Definition}
\newtheorem{remark}[definition]{Remark}
\DeclareMathOperator{\tr}{Tr}

\theoremstyle{plain}
\newtheorem{assumption}{Assumption}

\makeatletter

\date
\makeatother

\title{The Schwinger-Dyson equations for random fuzzy geometries coupled to matter}

\usepackage[backend=bibtex]{biblatex}
\author{Jeremy Gamble, Masoud Khalkhali, Nathan Pagliaroli}
\AtEndDocument{%
  \par
  \medskip
  \begin{tabular}{@{}l@{}}%
  \\
    \textsc{Department of Mathematics, University of Western Ontario, Canada}\\
    \textit{E-mail address}: \texttt{jgambl4@uwo.ca}\\
    \\
    \textsc{Department of Mathematics, University of Western Ontario, Canada}\\
    \textit{E-mail address}: \texttt{masoud@uwo.ca}\\
    \\
    \textsc{Department of Combinatorics and Optimization, University of Waterloo}\\
    \textit{E-mail address}: \texttt{npagliar@uwaterloo.ca}\\
  \end{tabular}}

\begin{document}
	\maketitle
	\begin{abstract}
In this work we study the Schwinger-Dyson equations and saddle point equations of matrix integrals that come from type $(0,1)$ random fuzzy geometries coupled to fermions or bosons. Such random fuzzy geometries are bi-tracial Hermitian matrix ensembles with a determinant contribution in the integrand. We derive the Schwinger-Dyson equations using complex analytic techniques from the saddle point equation. For arbitrary potentials with either bosonic or fermionic contributions, their Schwinger-Dyson equations can be solved iteratively.  For both the Gaussian models with either one boson or fermion we rigorously derive the formula for the free energy and first moment in terms of elliptic integrals. In the bosonic case this solution is closely related to the Hoppe model and the three-colour model.
	\end{abstract}
	\tableofcontents
	\section*{Introduction}

A spectral triple $(\mathcal{A},\mathcal{H},D)$ consists of an involutive unital algebra $\mathcal{A}$ of operators on a Hilbert space $\mathcal{H}$, and a self-adjoint Dirac operator $D$ with compact resolvent that acts on $\mathcal{H}$ and is such that the operator norm $||[D,a]||$ is finite for any $a \in \mathcal{A}$. In particular, we are interested in real spectral triples which have an additional anti-linear isometry $J: \mathcal{H} \xrightarrow{} \mathcal{H}$ that is required to satisfy several conditions \cite{varilly2006introduction}.  
The motivation to study real spectral triples is that they serve as noncommutative analogs of   $\text{spin}^c$  Riemannian manifolds. This idea is based on the fact that any closed $\text{spin}^c$ Riemannian manifold $M$ gives rise to a real spectral triple in which the algebra $\mathcal{A}= C^{\infty} (M)$ is the algebra of smooth complex valued functions on  $M$ and the Hilbert space is the space of square integrable sections of the spinor bundle such that the elements of $\mathcal{A}$ act as multiplication operators and the Dirac operator $D$ is the usual Dirac operator of $M$. The metric information of a spin$^{c}$ Riemann manifold can be recovered from its real spectral triple, where the Dirac operator takes the place of the usual metric via Connes' distance formula \cite{connes1992metric}. 

The framework of spectral triples is used in the spectral action principle, which associates the following fundamental action functional with a spectral triple:
\begin{equation*}
   S(D,\psi)= \tr (\chi(D/\Lambda)) + \langle \psi, D \psi\rangle 
\end{equation*}
where $D$ is the Dirac operator, $\chi$ is a smooth even cutoff function, the parameter $\Lambda$ fixes the mass scale, and $\psi$ is a spinor on the Hilbert space. The spectral action was first introduced in the seminal paper \cite{chamseddine1997spectral}, where it was shown that the asymptotic expansion of the spectral action produces the standard model action coupled to Einstein plus Weyl gravity. Since this success, there have been attempts to quantize the spectral action \cite{van2022one,van2023cyclic,gakkhar2024spectral,van2024towards,perez2024spectral,perez2024loop,hekkelman2025power} to arrive at a model of quantum gravity. In particular in \cite{barrett2015matrix}, Barrett considers path integrals over the moduli space of Dirac operators for spectral triples with finite dimensional matrix algebras, known as fuzzy geometries (or fuzzy spectral triples) of the form
\begin{equation*}
    \int e^{-S(D,\psi)}dDd\psi.
\end{equation*}
The moduli space of Dirac operators are finite dimensional real vector spaces of matrices, and thus for an appropriate action are well-defined mathematically. Note that $\psi$ denotes a Grassmann variable, so integration with respect to $\psi$ is Br\'ezin integration. Such integrals are matrix integrals and the associated probability distributions are called Dirac ensembles. Fuzzy geometries and Dirac ensembles have been studied both numerically and analytically. Recent work has found connections with minimal models of conformal field theory \cite{hessam2023double}, topological recursion \cite{azarfar2024random}, and map enumeration \cite{khalkhali2024coloured,khalkhali2022spectral}. Extensions of this framework have also been proposed, such as gauge fuzzy spectral triples \cite{perez2022multimatrix}, BV quantization \cite{gaunt2022bv}, and the addition of internal spaces \cite{barrett2026fuzzy}. See \cite{hessam2022noncommutative} for a recent review.

In this paper we build upon the work in \cite{khalkhali2024large} that studies type $(0,1)$ Dirac ensembles with fermionic contributions to their action. The $(0,1)$ refers to the signature of the associated Clifford module. In this paper we derive the Schwinger-Dyson equations (SDE) for fermionic type $(0,1)$ Dirac ensembles with arbitrary potentials, allowing for a perturbative solution for the moments.  We emphasize because of the fermionic contributions to the integral, deriving the Schwinger-Dyson equations in the standard approach does not produce a system of equations in terms of tracial moments. Thus, a more innovative approach is required. Explicit formulae for the free energy of the Dirac ensemble with a Gaussian potential in terms of elliptic integrals are found. More generally, for any polynomial potential  the SDE can be solved iteratively. We find solutions in terms of elliptic integrals that are deeply related to the Hoppe model \cite{hoppe1989quantum,kazakov1999d} and the three-colour matrix model \cite{di1994entropy,eynard1998iterative,kostov2002exact}.

In Section \ref{sec:background} we provide the necessary background and setup for these models. In Section \ref{sec: DSE from saddle}, we use the saddle point equation to derive the SDE for these models. In Section \ref{sec: Gaussian}, we derive explicit formulae for the second moment, and therefore also the free energy, of the Gaussian models with either bosonic or fermionic matter in terms of elliptic integrals involving the coupling constant.

\section{Background}\label{sec:background}

\subsection{Dirac ensembles}
Fuzzy spectral triples capture the geometric structure of ``fuzzy spaces" such as the fuzzy sphere \cite{madore1992fuzzy} or fuzzy torus \cite{schreivogl2013generalized}. The algebra of smooth functions of a manifold is replaced by an algebra of matrices, the Dirac operator by a self-adjoint matrix, and there is a Clifford algebra action on the spinor space $V$.  
\begin{definition}\cite{barrett2015matrix}
A \textit{fuzzy spectral triple} of signature $(p, q)$ is a real spectral triple of the form
$(M_{N}(\mathbb{C}),V \otimes  M_{N} (\mathbb{C}) , D; J, \Gamma )$
where  $V$ is an irreducible Clifford module for the Clifford algebra $Cl_{p,q}$  with real structure $J_{V}$ and grading $\Gamma_{V}$. The real structure of the fuzzy spectral triple is $J = J_{V}\otimes I$ and if $p + q$
is even, the fuzzy spectral triple is an even real spectral triple with grading $\Gamma =  \Gamma_{V}\otimes I$. Note that $q-p$ is referred to as the $KO$ dimension of the real spectral triple.
\end{definition}

 A \textit{Dirac ensemble}, is a probability distribution on a set of fuzzy spectral triples. Previous works have considered probability distributions on sets of fuzzy spectral triples with a fixed algebra and signature, making it so that the only piece of information that varies is the Dirac operator. For such sets, the moduli space of Dirac operators was classified in \cite{barrett2015matrix}. 

  As discussed in the introduction, recent work \cite{barrett2024,khalkhali2024large,verhoeven2023geometry} has aimed to add fermionic contributions to the partition functions of Dirac ensembles. That is, consider integrals of the form
 \begin{equation}
    Z = \int_{\mathcal{D}}\int_{\mathbb{G}} e^{-\tr S(D) - \langle \psi, D\overline{\psi}\rangle } dD d\psi d\overline{\psi}  . 
\end{equation}
where the fermionic space for this fuzzy spectral triple is taken to be the complex (or optionally real) Grassmann algebra generated by the Hilbert space $V \otimes  M_{N} (\mathbb{C}) $. We restrict our attention in this work to Dirac ensembles for type $(0,1)$ fuzzy geometries: 
$$(M_N(\cc),\cc \otimes M_{N}(\cc),1\otimes \gamma,*\otimes J_V,D_{fuzzy} = [H,\cdot]),$$
where $H$ is some Hermitian matrix.
\begin{remark}
 Note that the space of $N\times N$ Hermitian matrices does not perfectly parameterize the moduli space of Dirac operators here. Since $D$ is expressed as a commutator with a Hermitian matrix $H$, then $\widetilde{H}=H+t I_N$ with $t \in \mathbb{R}$, so that

$$
[\widetilde{H}, m]=\widetilde{H} m-m \widetilde{H}=H m+t m-m H-t m=[H, m] .
$$

Therefore there are many Hermitian matrices $H$ representing the same Dirac operator $D$. The solution used here is to consider for each $D$, the fiber of matrices with form $$\left\{H_0+t I_N: \operatorname{Tr}\left(H_0\right)=0, H_0^*=H_0, D=\left[H_0, \cdot\right] \otimes I_N\right\}$$ which all map to the same Dirac operator $D$. Then on each fiber of $D \in \mathcal{D}_N$ we put the probability measure

$$
\sqrt{\frac{a}{\pi}} \exp \left(-a \operatorname{Tr}(H)^2\right) 
$$
for a new parameter $a$ that is introduced but does not in fact come into any formulae derived. Further details can be found in \cite{khalkhali2024large} and in particular chapter 4.2.2 of \cite{verhoeven2023geometry}. 
\end{remark}

The fermionic integral is going to result in a factor of $\det(D)$ or $\text{pf}({D})$, which, since $D=[H,\cdot]\otimes 1$ has zeroes for eigenvalues, will be zero. The approach taken in \cite{khalkhali2024large}  to avoid such trivialities is to augment the Dirac operator with a mass term while maintaining a spectral triple structure. We now denote our original class of Dirac operators by $D_{fuzzy}$, i.e. $D_{fuzzy} = [H,\cdot]$, and reserve $D$ for Dirac operators  of the new spectral triple. Taking the external product, the spectral triple becomes
\begin{align*}
	(M_N(\cc),M_{N}(\cc)\otimes\cc,1\otimes \gamma,*\otimes J_V,D_{fuzzy})\rightsquigarrow (M_N(\cc),M_{N}(\cc)\otimes \cc^2,1\otimes \sigma_3,*\otimes J, D)
\end{align*}
where 
\begin{align}
	J\begin{bmatrix}
		v_1 \\ v_2 
	\end{bmatrix} = \begin{bmatrix}
	\overline{v_1} \\ -\overline{v_2}
	\end{bmatrix} \qquad D = D_{fuzzy}\otimes \sigma_1 + m\otimes\sigma_2, \label{modified dirac}
\end{align}
$m$ is the fermion mass, and $\sigma_i$ are the Pauli matrices. Note that this does change the spectral triples KO-dimension to zero. Since we have both the normal and reduced fermionic action to consider, we denote by $S_{f}(D,\psi)$ the fermionic action. 

In general, we will choose a polynomial potential of the form 
\begin{equation*}
     \sum_{i=2}^{d}t_{i} \tr D^{i},
\end{equation*}
where $t_{2},...,t_{d}$ are real coupling constants such that the integral converges. For example, quartic $(0,1)$ Dirac ensemble has the space of geometries 
	\begin{align*}
		\mathcal{D}_N = \{ [H,\cdot]\otimes\sigma_1 + m\otimes\sigma_2 : H^*=H \}
	\end{align*}
	with a partition function
	\begin{align*}
		Z = \int_{\mathcal{D}_{N}} \int_{\mathbb{G}}\exp\bigg( -\trace{g_2 D^2+g_4 D^4} \bigg)\exp\bigg( -S_f(D,\psi) \bigg)\dif D \dif\psi\dif\overline{\psi}.
	\end{align*}
    We can now also write our new operator $D$ in terms of $H$. In this example, the bosonic part of the action can be written as 
\begin{align*}
	S(D_H)&=g_4(4N\trace{H^4}-8\trace{H^3}\trace{H}+6\trace{H}^2 \\ &\qquad-8Nm^2\trace{H^2}+8m^2\trace{H}^2-4N^2m^2+Nm^4)\\&\qquad+g_2(4N\trace{H^2}-2\trace{H}^2 + Nm^2).
\end{align*}

We can perform the integral over fermionic variables first. Denote
\begin{align*}
	F(D) = \int_F e^{-S_f(D,\psi)}\dif\psi\dif\overline{\psi}.
\end{align*}
It is not hard to see that $F(D) = \det(D)$ for complex fermionic action and for the real fermionic action  $F(D) = \text{pf}({D})$, both up to a constant factor which can be absorbed into normalization. We now compute the spectrum of $D_{fuzzy}$ and then $D$. Since $H$ is hermitian, it has an orthonormal basis of eigenvectors we denote by $\{v_i\}_{i=1}^N$ with associated eigenvectors $\{\lambda_i\}_{i=1}^n$. Expressing $D_{fuzzy}$ as $H\otimes I_N - I_N\otimes H$, one immediately finds that $\{v_i\otimes v_j\}_{i,j=1}^N$ is a linearly independent set of eigenvectors for $D$ with eigenvalues $\lambda_i-\lambda_j$, thus 
\begin{align*}
	\det(D_{fuzzy}) = \prod_{i,j=1}^N (\lambda_i-\lambda_j).
\end{align*}

With the fermionic part of the integral expressed in terms of the eigenvalues of $H$, and with access to the spectrum of $D$ in terms of $H$, we can now apply Weyl's integration formula  and get a matrix integral purely in terms of the eigenvalues of $H$,
\begin{align*}
	Z &= \int \exp\bigg(-S_{b}(\lambda_{1},...,\lambda_{N})+ \frac{\beta_2}{4}\sum_{i,j}\log(m^2+(\lambda_i-\lambda_j)^2)\\ &\qquad+ \sum_{i\neq j}\frac{\beta}{4}\log((\lambda_i-\lambda_j)^2)-a\bigg(\sum_i \lambda_i\bigg)^2 \bigg)\dif\lambda_1\cdots\dif\lambda_N.
\end{align*}

Recall that the fermionic action is motivated by the noncommutative geometric approach to the standard model. However, we can just as easily introduce bosonic matter by taking the complex or real algebra generated by our Hilbert space. The resulting matter integrals are multivariate Gaussian integrals and the above equation can be reproduced with a change in sign for $\beta_{2}$. We will see that such models are very closely related to known matrix models and to string theory \cite{hoppe1989quantum,eynard1998iterative,kazakov1999d}, which motivates the use of bosonic action.

\subsection{The equilibrium measure}
In random matrix theory, we are often interested in the limiting behaviour of the model as $N\to\infty$. The spectral density or at least its moments, i.e. a probability measure $\mu_E$ such that
\begin{align*}
	\lim_{N\to\infty}\frac{1}{N}\mathbb{E}[ \trace{f(H)}] = \int_\rr f(x)d\mu_E(x)
\end{align*}
will be our focus. 
In \cite{khalkhali2024large, verhoeven2023geometry}, this problem is studied in great detail. The first important thing they show is that, indeed, such a measure $\mu_E$ exists and has compact support, so long as it satisfies certain assumptions. We make the following assumption on the choice of the potential for the proceeding analysis.

\begin{assumption}
    The choice of $S$ is such that the equilibrium measure $\mu_E = \rho(x)\dif x$ for a continuous density function $\rho:\Sigma\to\rr$, where $\Sigma\subset \rr$ is a disjoint union of intervals.
\end{assumption} 

This is generally the case for Hermitian matrix integrals with polynomial potentials, and it is proven to be the case for the type $(0,1)$ model with a fermion and Gaussian or quartic potential in \cite{khalkhali2024large}.
 
 We denote the \textit{$n$th moment} of $\mu_E$ as
\begin{align*}
	\mu_n \coloneqq \int_\Sigma x^n\rho(x)\dif x.
\end{align*}
The measure $\mu_E$ is found by minimizing this functional 
\begin{align*}
	I(\mu) &= \int\bigg(V'(x)-\frac{\beta_2}{4}\log(m^2+(x-y)^2)\\&\qquad-\frac{\beta}{4}\log((x-y)^2)+\frac{a}{2}xy\bigg)\dif\mu(x)\dif\mu(y).
\end{align*}
where $V'(x)$ denotes a polynomial in $x$ whose coefficients are polynomials in $m,t_{2},...,t_{d}$, and moments $\mu_{n}$ that comes from our choice for $S_{b}$.
Further, it is shown that this problem may be transformed into solving the saddle point equation
\begin{align*}
	\text{P.V.} \int_\Sigma \frac{\rho(y)}{y-x}\dif y = -\frac{2}{\beta}V'(x)-\frac{\beta_2}{2}\int_\Sigma \frac{x-y}{m^2+(x-y)^2}\rho(y)\dif y.
\end{align*}
This will be the starting point for our analysis. The authors of \cite{khalkhali2024large} go on to further recast this problem using the Stieltjes transform and the Sokhotski-Plemelj formula. They are able to explicitly solve for $\rho$ when $m=0$ in both the quartic and quadratic case, and obtain some numerical results when $m>0$. It is our goal to study this problem analytically from two different approaches. Firstly, to obtain SDE following the approach of \cite{eynard1998iterative} which allows one to perturbatively solve the problem. Secondly, in the quadratic case ($g_4 = 0, g_2>0$), we use complex analytic tools to study the problem in line with the work done in \cite{hoppe1989quantum,kazakov1999d,kostov2002exact}.

\section{Saddle Point Equation}\label{sec: DSE from saddle}
In this section, we use the saddle point equation for type $(0,1)$ Dirac ensembles with a fermion or boson and a general potential $S$ to derive the SDE. The usual approach to deriving the SDE for Hermitian matrix integrals is to consider the equality 
\begin{equation*}
    \sum_{i,j=1}^{N}\int_{\mathcal{H}_{N}} (H^{\ell})_{i,j} e^{-S(H)}dH = 0 
\end{equation*}
for $\ell>0$, which follows from Stokes' Theorem. By expanding the left hand side one arrives at a recursive equation involving tracial moments for each $\ell$. The obstacle of applying this method here is that the determinant in the integrand of our Dirac ensemble prevents the end result from being in terms of tracial moments.  

We start by considering the resolvent function for this matrix integral,
\begin{align*}
	W(x) \coloneqq \int_\Sigma \frac{\rho(y)}{x-y}\dif y. 
\end{align*}
This function is analytic in the entire complex plane, except for the compact interval(s) $\Sigma$, where it has continuous limits from above and below, which differ. We introduce the notation $W(x\pm i0)$ which is shorthand for $\lim_{\epsilon\to 0} W(x\pm i\epsilon)$. We will often expand $W(x)$ for $|x|>L$, where $\text{Supp}{\rho}\subset [-L,L]$,
\begin{align*}
	W(x) = \int_\Sigma \frac{1}{x}\frac{\rho(y)}{1-y/x}\dif y = \frac{1}{x}\int_\Sigma \rho(y)\bigg(\sum_{i=0}^\infty \frac{y^n}{x^n} \bigg)= \sum_{n=0}^\infty \frac{\mu_n}{x^{n+1}}.
\end{align*}
Note that $\mu_0 = 1$ so that $\rho$ is a valid probability density. \\

We will express the saddle point equation 
\begin{align}
	\text{P.V.}\int_\Sigma \frac{\rho(y)}{y-x}\dif y = -\frac{2}{\beta}V'(x)-\frac{\beta_2}{2}\int_\Sigma \frac{x-y}{m^2+(x-y)^2}\rho(y)\dif y \label{vareqn}
\end{align}
in a different form to derive the SDE. We begin by rewriting the principal value as \cite{verhoeven2023geometry} lemma 3.4.2
\begin{align*}
	\text{P.V.}\int_\Sigma \frac{\rho(y)}{y-x}\dif y = -\frac{1}{2}(W(x+i0)+W(x-i0)),
\end{align*}
valid for $x\in\Sigma$. Next, we use partial fractions to write
\begin{align*}
	\int_\Sigma \frac{x-y}{m^2+(x-y)^2}\rho(y)\dif y &= \frac{1}{2}\bigg(\int_\Sigma \frac{\rho(y)\dif y}{x+mi-y}+\int_\Sigma \frac{\rho(y)\dif y}{x-mi - y}\bigg) = \frac{1}{2}\bigg(W(x+mi)+W(x-mi)\bigg).
\end{align*}
 Hence, we express equation \eqref{vareqn} as the following, for all $x\in \Sigma$:
\begin{align}
	W(x+i0)+W(x-i0)=\frac{4}{\beta}V'(x) +\frac{\beta_2}{2}(W(x+mi)+W(x-mi)) \label{saddle}.
\end{align}

We began by defining the resolvent $W$ as an analytic function on $\cc\setminus\Sigma$, however, it can also be seen as an analytic function on a Riemann surface with infinitely many sheets. The above saddle point equation then describes the behavior of $W$ in successive sheets. For instance, let us rewrite the above equation as 
\begin{align*}
	W(x-i0) = \frac{4}{\beta}V'(x)+\frac{\beta_2}{2}[W(x+mi)+W(x-mi)]-W(x+i0).
\end{align*}
Now, let us temporarily label $W$ with the subscript $k$ indicating which sheet we are on, with $W_0$ being the original sheet, and with the convention that passing through the cut from above leads onto successive sheets, and passing through the cut from below leads to previous sheets. That is, $W_k(x+i0)=W_{k+1}(x-i0)$. We may then write the above equation as
\begin{align*}
	W_1(x+i0) = \frac{4}{\beta}V'(x)+\frac{\beta_2}{2}[W_0(x+mi)+W_0(x-mi)]-W_0(x+i0)
\end{align*}
and we see that as we move into successive sheets, there are additional cuts along which $W_k$ fails to be analytic due to the terms $W(x+mi)+W(x-mi)$, specifically on $\Sigma \pm kmi$ for $k\in \zz$. The existence of these additional cuts suggests a symmetry of the function $W$ which can be exploited to gain additional information about $W$. Namely, we will extract the SDE for our matrix integral by constructing an entire function $F$, and we will prove that it is also constant, allowing us to extract the SDE from the coefficients of $F$ as a linear combination of $\zeta$-functions. This largely follows the strategy used in \cite{eynard1998iterative}, albeit our approach being rigorous.

\subsection{Construction of the Entire Function $H$}

Let us start by introducing some notation. First, $I_k \coloneqq \Sigma+kmi$ and $x_k \coloneqq x-kmi$. This means that the saddle point equation \ref{saddle} may be written for $x\in I_k$ as (noting that $x_k \pm mi = x -(k\mp1)mi=x_{k\mp 1}$)
\begin{align*}
	W(x_k + i0) - W(x_k - i0) = \frac{4}{\beta}V'(x)+\frac{\beta_2}{2}[W(x_{k+1})+W(x_{k-1})].
\end{align*}
Now, multiply this equation by $W(x_k + i0) - W(x_k - i0)$ to obtain
\begin{align*}
	W(x_k +i0)^2 &- W(x_k-i0)^2\\ &= \bigg\{ \frac{4}{\beta}V'(x)+\frac{\beta_2}{2}[W(x_{k+1})+W(x_{k-1})] \bigg\}(W(x_k+i0)-W(x_k-i0)) 
\end{align*}
and rearranging yields
\begin{align*}
	W(x_k+i0)^2-\bigg\{\frac{4}{\beta}V'(x)+\frac{\beta_2}{2}[W(x_{k+1})+W(x_{k-1})]\bigg\}W(x+i0) \\ = W(x_k-i0)^2 - \bigg\{\frac{4}{\beta}V'(x)+\frac{\beta_2}{2}[W(x_{k+1})+W(x_{k-1})]\bigg\}W(x_k-i0)
\end{align*}
implying that the function 
\begin{align*}
	g_k(x) \coloneqq W(x_k)^2 - \bigg\{\frac{4}{\beta}V'(x_k)+\frac{\beta_2}{2}[W(x_{k+1})+W(x_{k-1})]\bigg\}W(x_k)
\end{align*}
is analytic along $I_k$. This suggests a candidate function, built from $W$, that is entire. We define
\begin{align*}
	H(x) := \sum_{k=-\infty}^\infty\underbrace{ W(x_k)^2 - \frac{\beta_2}{2}W(x_k)W(x_{k+1})-\frac{4}{\beta}[V'(x_k)W(x_k)-R(x_k)]}_{\coloneqq f_k(x)}
\end{align*}
where $R(x)$ is the polynomial part of $V'(x)W(x)$. This term must be subtracted in order that $H$ converges.

\begin{prop}
	The function $H$ is entire. 
\end{prop}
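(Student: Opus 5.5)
The plan is to prove the statement in three stages: convergence of the series defining $H$, analyticity of $H$ away from the cuts, and vanishing of the jump of $H$ across every cut together with removability of the endpoints. Throughout, I use that $W$ is analytic on $\cc\setminus\Sigma$ and has continuous boundary values $W(x\pm i0)$ on $\Sigma$ (Assumption 1). It follows that $W$ is bounded on $\cc$: it is continuous up to the cut from either side, and $W(x)=O(1/x)$ at infinity.

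\textbf{Convergence.} Fix a compact set $K\subset\cc$. For $|k|$ large, $|x_k|>L$ uniformly in $x\in K$, so I can insert the expansion $W(x_k)=\sum_n \mu_n x_k^{-n-1}$.

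1. The terms $W(x_k)^2$ and $W(x_k)W(x_{k+1})$ are $O(k^{-2})$ uniformly on $K$, so they sum absolutely.
2. Since $R$ is by definition the polynomial part of $V'W$, we have $V'(x_k)W(x_k)-R(x_k)=c\,x_k^{-1}+O(x_k^{-2})$. Here $c$ is an explicit combination of the coefficients of $V'$ and the moments $\mu_n$.
3. The remainder $O(x_k^{-2})$ sums absolutely. The leading term $c/x_k$ does not sum absolutely, so I would interpret $\sum_k$ as the symmetric limit $\lim_{K'\to\infty}\sum_{|k|\le K'}$. Pairing $k$ with $-k$ gives
\[
\frac{1}{x-kmi}+\frac{1}{x+kmi}=\frac{2x}{x^2+k^2m^2}=O(k^{-2}),
\]
so the symmetric limit converges uniformly on $K$.

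This step fixes the precise meaning of $H$, and I expect it to be the main technical obstacle; if $c=0$ happens to hold it disappears. With uniform convergence in hand, $H$ is analytic on $\cc\setminus\bigcup_k I_k$ by Weierstrass, since each $f_k$ is analytic there.

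\textbf{No jump across $I_k$.} Fix $k$ and let $x$ move across $I_k$, i.e.\ $x_k$ crosses $\Sigma$. The polynomial $R$ and $V'$ have no jumps, and $W(x_j)$ for $j\ne k$ is analytic near $I_k$ (assuming the cuts $I_j$ are disjoint, which holds since $m>0$ and $\Sigma\subset\rr$). So only $f_k$ and $f_{k-1}$ can jump, through the factor $W(x_k)$. Their combined $W(x_k)$-dependent part is
\[
W(x_k)^2-\frac{\beta_2}{2}W(x_k)\big[W(x_{k+1})+W(x_{k-1})\big]-\frac4\beta V'(x_k)W(x_k),
\]
which is exactly $g_k(x)$. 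The computation preceding the statement shows $g_k(x+i0)=g_k(x-i0)$ on $I_k$, using the saddle point equation \eqref{saddle} at the real point $x_k\in\Sigma$. I would double-check the sign and indexing conventions there, so that $V'$ is evaluated at $x_k$ and the shifted arguments $x_k\pm mi$ are exactly $x_{k\mp1}$.

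The uniform convergence from the first stage extends up to the cut, because the tail terms are analytic in a neighbourhood of $I_k$. Hence $H$ is continuous across the interior of $I_k$, and by Morera's theorem (or the Painlevé theorem on continuous extension across an arc) it is analytic there.

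\textbf{Endpoints.} What remains is the finite set of endpoints of each $I_k$. Near such a point $H$ is analytic on a punctured disc and bounded, because $W$ is bounded as noted above. Riemann's removable singularity theorem then gives analyticity at the endpoints, so $H$ is entire.
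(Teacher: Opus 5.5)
Your proposal is correct and follows essentially the paper's argument: the cut $I_k$ is seen only by $f_{k-1}+f_k$, whose $W(x_k)$-dependent part is $g_k$, and $g_k$ has vanishing jump by the saddle-point equation. Convergence comes from $O(k^{-2})$ bounds plus the $c/x_k$ term, which the paper handles via the cotangent expansion and you handle by symmetric pairing.

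Your write-up is more careful than the paper's in three places:
\begin{enumerate}
\item The paper's M-test does not literally cover the $1/x_k$ term, since $\sum_k 1/|x_k|$ diverges; your symmetric summation is what is actually needed there.
\item The paper only implicitly passes from ``no jump'' to analyticity across the interiors of the cuts, which your Morera/Painlev\'e step supplies.
\item The paper ignores the cut endpoints altogether, which your removable-singularity step supplies.
\end{enumerate}

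Two small points. Your boundedness of $W$ comes from the paper's asserted continuity of the boundary values of $W$, not from Assumption 1 alone. Also, integrating the saddle-point equation against $\rho$ (both kernels are antisymmetric) gives $c=\int V'(y)\rho(y)\,\mathrm{d}y=0$, so the conditional-convergence issue you flagged as the main obstacle does not actually arise.
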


\begin{proof}
	Before we prove that $F$ converges to an entire function, we first show that this function cannot have cuts along any $I_k$. Indeed, the function $f_k(x)$ has cuts along $I_k$ and $I_{k+1}$, however, observe
	\begin{align*}
		f_{k-1}(x)+f_k(x)+f_{k+1}(x) &=  g_k(x) + W(x_{k-1})^2 + W(x_{k+1})^2 -\frac{\beta_2}{2}W(x_{k+1})W(x_{k+2}) \\
		&\qquad - \frac{4}{\beta}[ V'(x_{k-1})W(x_{k-1})+V'(x_{k+1})W(x_{k+1})\\
		&\qquad - R(x_k)-R(x_{k-1})-R(x_{k+1}) ].
	\end{align*}
	Thus, $f_{k-1}+f_k + f_{k+1}$ is a function without cuts along $I_k$, since $g_k$ has no cuts along $I_k$, and $R$ is polynomial and therefore has no cuts at all. Considering that $f_k$ may only have cuts along $I_{k-1},I_{k},I_{k+1}$, and $F$ may be written as 
	\begin{align*}
		H(x) = f_{k}(x)+f_k(x)+f_{k+1}(x) + \sum_{j\neq k-1,k,k+1} f_j(x),
	\end{align*}
	i.e. a sum of two functions, neither of which has a cut along $I_k$. Since the choice of $k$ was arbitrary, we conclude that $S$ does not have cuts along any $I_k$. \\

	The previous argument immediately shows that $\sum_{|k|\leq N} f_k(x)$ has no cuts until $I_{\pm N}$, which will be useful in proving the convergence of $F$, which we turn to now. We will show that $H$ converges uniformly on all open balls $B(0;R) \coloneqq \{ x\in \cc : |x|<R \}$. \\
	
	We begin with some estimates that will prove useful. Recall that $L$ is the length of $\Sigma$. This gives a bound on the moments,
	\begin{align*}
		|\mu_n| \leq  \int_\Sigma \bigg| x^n \rho(x)\bigg| \dif x \leq L^n \int_\Sigma \rho(x)\dif x = L^n.
	\end{align*}
	
	Let us now assume $|x|<R$. Taking a term $W(x_k)$, we can expand this when 
	\[ L<|k|m-R<|k|m-|x| \leq \big||x|-|kmi|\big|\leq |x_k|  \]
	i.e. when $|k|>(R+L)/m$, in which case we have
	\begin{align*}
		|W(x_k)|&=\bigg|\sum_{n=0}^\infty \frac{\mu_n}{x_k^{n+1}}\bigg| \leq \frac{1}{|x_k|}\sum_{n=0}^\infty \bigg(\frac{L}{|x_k|}\bigg)^{n}  = \frac{1}{|x_k|-L} \leq \frac{1}{|k|m-(R+L)}.
	\end{align*}
	We now examine the term $V'(x_k)W(x_k)-R(x_k)$ carefully. We will write down the potential in the general form $V'(x)=\sum_{j=0}^{\deg V'}g_{j+1}x^j$. First,
	\begin{align}
		V'(x_k)W(x_k)-R(x_k) &= \sum_{n=0}^\infty\sum_{j=0}^n \frac{g_{j+1}\mu_n}{x_{k}^{n+1-j}} = \underbrace{\sum_{n=0}^\infty \sum_{j=0}^{n-1} \frac{g_{j+1}\mu_n}{x_k^{n+1-j}}}_{A}+\underbrace{\sum_{n=0}^{\deg V'} \frac{g_{n+1}\mu_n}{x_k}}_{B} \label{cotdecomp}
	\end{align}
	The term $A$ may be bounded in a similar manner,
	\begin{align*}
		|\text{(a)}| &\leq \sum_{j=0}^{\deg V'}\frac{1}{|x_k|^{1-j}}\sum_{n=j+1}^\infty \frac{g_{j+1}L^n}{|x_k|^{n}} = \sum_{j=0}^{\deg V'} \frac{g_{j+1}L^{j+1}}{|x_k|(|x_k|-L)}\leq G_1 \frac{1}{(|k|m-R)(|k|m-R-L)}
	\end{align*}
	where $G_1 = \sum_{j=0}^{\deg V'} g_{j+1}L^{j+1}$. For the term $B$, set $G_2 = \sum_{n=0}^{\deg V'}g_{n+1}\mu_n$, so that
	\[ B = G_2 \frac{1}{x_k}. \] 
	
	Our next step is to apply the M-test to $H(x)$ on the ball $B(0;R)$. Set $N=\lceil(R+L)/m\rceil$. We split $S$ into two terms,
	\[ H(x) = \sum_{|k|\leq N} f_k(x) + \sum_{|k|\geq N} f_k(x) \]
	The first term may only have cuts at $I_{\pm N}$ as previously discussed, and $I_{\pm N} = \Sigma \pm \lceil R+L\rceil i$, thus the cuts lie outside of $B(0;R)$, hence the first term $\sum_{|k|\leq N} f_k(x)$ is analytic on $B(0;R)$. For the second term, by our choice of $N$, we may expand each $f_k$ and use our previous estimates to find 
	\begin{align}
		|f_k(x)| &\leq |W(x_k)|^2 + \frac{\beta_2}{2}|W(x_k)||W(x_{k+1})| + \frac{4}{\beta}|V'(x_k)W(x_k)-R(x_k)| \nonumber\\
		&\leq \frac{1}{(|k|m-(R+L))^2}+\frac{\beta_2}{2}\frac{1}{(|k|m-(R+L))(|k+1|m-(R+L))}\nonumber\\&\qquad+\frac{4}{\beta}G_1\frac{1}{(|k|m-R)(|k|m-R-L)} + \frac{4}{\beta}G_2 \frac{1}{|x_k|}. \label{fbound}
	\end{align} 
	Upon summing over $k$, all the terms, except for the last one, clearly converge. However, note that
	\[ \sum_{k\in\zz} \frac{1}{x_k} = \frac{1}{mi}\sum_{k\in\zz} \frac{1}{x/mi-k} = \frac{\pi}{mi}\cot\frac{\pi x}{mi} \]
	by the well known partial fraction expansion of $\cot$. Hence $\sum_{|k|\geq N} \frac{1}{x_k}$ is analytic on $B(0;R)$. Thus, we may apply the M-test to $\sum_{k\in\zz} f_k$ for any $B(0;R)$, hence showing that $\sum_{k\in\zz} f_k$ uniformly converges to a continuous function on any compact subset of $\cc$, and therefore to a holomorphic function on $\cc$ by an application of Morera's theorem.
\end{proof}

One more crucial property to note about the function $H$ we have constructed is that it is periodic i.e. $H(x+mi)=H(x)$. Indeed, since $(x+mi)_k = x+mi - kmi = x-(k-1)mi = x_{k-1}$, we have
\begin{align*}
	H(x+mi)= \sum_{k\in\zz} W(x_{k-1})^2 - \frac{\beta_2}{2}W(x_{k-1})W(x_k) - \frac{4}{\beta}[V'(x_{k-1})W(x_{k-1})-R(x_{k-1})] 
\end{align*}
which is equal to $H(x)$ by a shift $k\mapsto k+1$ in the summation. 

\begin{prop}
	The function $H$ is constant. 
\end{prop}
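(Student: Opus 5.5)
Since $H$ is entire (previous proposition) and periodic with period $mi$, the plan is to show that $H$ is bounded on a fundamental strip. Liouville's theorem then gives that $H$ is constant. Concretely, set $S=\{x\in\cc : 0\le \operatorname{Im} x\le m\}$. Every value of $H$ is attained on $S$, so it suffices to prove $\sup_{x\in S}|H(x)|<\infty$. On compact pieces $S\cap\{|\operatorname{Re} x|\le R_0\}$ this is automatic by continuity. The remaining task is to bound $H(x)$ uniformly for $x\in S$ with $|\operatorname{Re} x|$ large.

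For $x\in S$ with $|\operatorname{Re}x|>R_0:=L+1$, every shifted point $x_k=x-kmi$ satisfies $|x_k|\ge|\operatorname{Re}x|>L$. So every term may be expanded in its Laurent series at infinity, with no cuts in the way. I will reuse the estimates from the proof of the previous proposition, with $|k|m$ replaced by $|x_k|\ge \max(|\operatorname{Re} x|,\,|k|m-m)$. They give
\[
|W(x_k)|\le \frac{1}{|x_k|-L},\qquad |V'(x_k)W(x_k)-R(x_k)-G_2x_k^{-1}|\le \frac{G_1}{|x_k|(|x_k|-L)}.
\]
The quadratic terms $W(x_k)^2$ and $W(x_k)W(x_{k+1})$ are then bounded by $C\sum_k (|x_k|-L)^{-2}$. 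Since $|x_k|^2=(\operatorname{Re}x)^2+(\operatorname{Im}x-km)^2$, this sum is bounded uniformly in $x\in S$ by comparison with $\sum_k (1+k^2m^2)^{-1}$. Indeed, it even tends to $0$ as $|\operatorname{Re}x|\to\infty$, roughly like $1/|\operatorname{Re}x|$. The $G_1$ terms are handled identically.

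The main obstacle is the term $\frac{4}{\beta}G_2\sum_k x_k^{-1}$, which does not converge absolutely. Here I will use the cotangent identity already derived:
\[
\sum_{k\in\zz}\frac{1}{x_k}=\frac{\pi}{mi}\cot\frac{\pi x}{mi}=\frac{\pi}{m}\coth\frac{\pi x}{m},
\]
understood as a symmetric limit, which is the way $H$ is summed. For $x\in S$ with $|\operatorname{Re} x|\ge R_0$, the quantity $\coth(\pi x/m)$ stays bounded and tends to $\pm1$. The real part of its argument is large, so we are away from its poles at $\pi x/m\in i\pi\zz$. Hence this piece is bounded on $S$ as well. One must check that splitting $f_k$ into the absolutely summable part plus $G_2x_k^{-1}$ is compatible with the symmetric summation defining $H$. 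This is fine because the absolutely summable part may be rearranged freely.

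Combining these bounds, $H$ is bounded on $S$, hence on $\cc$ by periodicity, and Liouville's theorem gives that $H$ is constant. As a byproduct, the limits $\operatorname{Re}x\to\pm\infty$ yield $H\equiv \pm\frac{4\pi}{\beta m}G_2$, the two values agreeing only if $G_2=0$. I expect this consistency condition, and the constant's value, to be what the paper uses next to extract the SDE.
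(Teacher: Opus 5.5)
Your proof is correct and is essentially the paper's argument: periodicity reduces the problem to the strip, continuity handles the compact part, and for large $|\re{x}|$ the non-absolutely-convergent piece $G_2\sum_k x_k^{-1}$ is resummed into $\frac{\pi}{m}\coth\frac{\pi x}{m}$ while the rest is bounded absolutely. Your bounds in terms of $|x_k|\ge\max(|\re{x}|,|k|m-m)$, together with the margin $R_0=L+1$, supply the uniformity on the unbounded strip that the paper's appeal to \eqref{fbound} (derived on balls) leaves implicit. One small slip in your closing remark: because of the $-\frac{4}{\beta}$ prefactor, $H\to-\frac{4\pi}{\beta m}G_2$ as $\re{x}\to+\infty$ and $H\to+\frac{4\pi}{\beta m}G_2$ as $\re{x}\to-\infty$, so your signs are reversed, but this does not affect your conclusion that $G_2=0$ and $H\equiv 0$.
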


\begin{proof}
	 Since $H$ is periodic, we need only show that  it is bounded on the infinite strip \[D\coloneqq\{x\in\cc : 0 \leq \im{x} \leq m \}.\]
	We will again make use of the bounds from the convergence of $H$. On the subset of $D$ satisfying $|\re{x}|\leq L$, $H$ is bounded as the set is compact and $F$ analytic. In the complement, i.e. for $|\re{x}|>L$, we have $|x_k|>L$, hence we can expand $f_k$ for all $k$ in this region. We write $H$ as 
	\[H(x) = \frac{G_2 \pi}{mi}\cot\frac{\pi x}{mi}+ \sum_{k\in\zz}\widetilde{f}_k(x)\]
	using the decomposition in \ref{cotdecomp}. It should be understood that $\widetilde{f}$ consists of the terms not contributing to the cotangent function. Writing 
	\[ |H(x)| \leq \bigg| \sum_{k\in\zz}\widetilde{f}_k(x)  \bigg| + \frac{G_2 \pi}{m}\bigg|\cot\frac{\pi x}{mi}\bigg|,\]
	we find that on $\{ x\in D : \re{x}>L \}$,  the first term is bounded due to \ref{fbound}, as is the cotangent term. Hence $H$ is bounded, and must be constant. 
\end{proof}

\subsection{Deriving the Schwinger-Dyson equations}

We are now in a position to derive the SDE using the function $H$. To this end, we will express $H$ as an infinite linear combination of independent $\zeta$-functions. Since we have shown that $H$ is constant, which implies the $\zeta$-function coefficients must vanish, from which the SDE may be derived. \\

For each $a\in \zz_{>0}$, we define
\[ \zeta_a(x) = \sum_{k\in\zz} \frac{1}{x_k^a} = \sum_{k\in\zz} \frac{1}{(x-kmi)^a} = \frac{1}{(mi)^a}\sum_{k\in\zz} \frac{1}{(\frac{x}{mi}-k)^a}. \] 
For notational simplicity, we take $w = x/mi$ and $w_k = w - k$, thus
\[\zeta_a(w) = \frac{1}{(mi)^a}\sum_{k\in\zz} \frac{1}{(w-k)^a} = \frac{1}{(mi)^a}\sum_{k\in\zz} \frac{1}{w_k^a}. \]
We also similarly define
\[ \zeta^\pm_{a,b}(x)= \sum_{k\in\zz}\frac{1}{x_k^a}\frac{1}{(x_k \pm mi)^b}=  \frac{1}{(mi)^{a+b}}\sum_{k\in\zz} \frac{1}{w_k^a}\frac{1}{(w_k \pm 1)^b}=\zeta^\pm_{a,b}(w). \]
Note that $\zeta_1(x)=\frac{\pi}{mi}\cot\frac{\pi x}{mi}$ and that $\zeta_{a+1}(x)=-\frac{1}{a}\zeta'_a(x)$. \\

Let us now rewrite $H$ in terms of these functions, using the usual expansion of $W$
\begin{align*}
	H(x) &= \sum_{k\in\zz} \underbrace{W(x_k)^2}_{\coloneqq A} - \frac{\beta_2}{4}\underbrace{( W(x_k)W(x_{k-1})+W(x_k)W(x_{k+1}))}_{\coloneqq B} - \frac{4}{\beta}\underbrace{[V'(x_k)W(x_k)-R(x_k)]}_{\coloneqq C}.
\end{align*}
Note that we split the $B$ terms into $W(x_k)W(x_{k-1})+W(x_k)W(x_{k+1})$ from $W(x_k)W(x_{k+1})$ by re-indexing. Now we turn to the sum over $A$ terms,
\begin{align*}
	A &= \sum_{k\in\zz}\sum_{a=0}^\infty\sum_{b=0}^\infty \frac{\mu_a \mu_b}{x_k^{a+b+2}} = \sum_{a=0}^\infty\sum_{b=0}^\infty \mu_a\mu_b \zeta_{a+b+2}(x) = \sum_{n=1}^\infty\bigg( \sum_{b=0}^{n-2} \mu_b \mu_{n-b-2} \bigg)\zeta_n(x).
\end{align*}
Next, the $B$ terms,
\begin{align*}
	B &= \sum_{k\in\zz}\sum_{a=0}^\infty\sum_{b=0}^\infty \frac{\mu_a}{x_k^a}\bigg( \frac{\mu_b}{x_{k-1}^b} + \frac{\mu_b}{x_{k+1}^b} \bigg) = \sum_{k\in\zz}\sum_{a=0}^\infty\sum_{b=0}^\infty \frac{\mu_a}{x_k^{a+1}}\bigg( \frac{\mu_b}{(x_{k}+mi)^{b+1}} + \frac{\mu_b}{(x_{k}-mi)^{b+1}} \bigg) \\
	&= \sum_{a=0}^\infty\sum_{b=0}^\infty \mu_a \mu_b (\zeta^+_{a+1,b+1}(x) + \zeta^-_{a+1,b+1}(x)).
\end{align*}
Finally, the $C$ terms,
\begin{align*}
	C &= \sum_{k\in\zz}\sum_{a=0}^\infty\sum_{j=0}^{l} \frac{\mu_a g_{j+1}}{x_k^{a+1-j}} = \sum_{a=0}^\infty\sum_{j=0}^{a} \mu_a g_{j+1} \zeta_{a+1-j}(x) = \sum_{n=1}^\infty\bigg( \sum_{j=0}^{\deg V'} \mu_{n+j-1}g_{j+1} \bigg)\zeta_n(x).
\end{align*}
We further need to express $\zeta_{a,b}^\pm$ in terms of $\zeta_c$ in order to equate coefficients. First, we write
\begin{align*}
	\zeta^\pm_{a,b}(w) = \frac{(-1)^b}{(mi)^{a+b}}\sum_{k\in\zz} \frac{1}{w_k^a}\frac{1}{(\mp 1 - w_k)^b}.
\end{align*}
Now we will make use of partial fraction decomposition, to get the following lemma.

\begin{lemma}
	The functions $\zeta^{\pm}_{a,b}$ can be expressed in terms of $\zeta_l$ as follows:
	\begin{align*}
		\zeta^+_{a,b}(w) &= \frac{(-1)^b}{(mi)^{a+b}}\bigg\{\sum_{l=1}^a {a+b -1 - l\choose b-1}(-1)^{a+b+l}(mi)^l \zeta_l(w)+ \sum_{l=1}^b {a+b-1-l \choose a-1}(-1)^{a+b+l}(mi)^l\zeta_l(w)\bigg\} \\
		\zeta^-_{a,b}(w) &= \frac{(-1)^b}{(mi)^{a+b}}\bigg\{\sum_{l=1}^a {a+b -1 - l\choose b-1}(mi)^l \zeta_l(w)+ \sum_{l=1}^b {a+b-1-l \choose a-1}(mi)^l\zeta_l(w)\bigg\}
	\end{align*}.
\end{lemma}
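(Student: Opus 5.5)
The plan is to reduce the lemma to a one-variable partial fraction identity and then sum over $k$. Work in the variable $w=x/mi$ and set $u=w_k=w-k$. Up to the prefactor, the summand of $\zeta^\pm_{a,b}$ is the rational function $u^{-a}(u\pm1)^{-b}$, which has poles only at $u=0$ and $u=\mp1$. Since $a+b\ge 2$, it decays at infinity, so it equals the sum of its principal parts at these two poles. This decomposition is exact, with no polynomial part.

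The coefficients come from Laurent expansions. At $u=0$, write $(u+c)^{-b}=c^{-b}(1+u/c)^{-b}=\sum_{j\ge0}\binom{-b}{j}c^{-b-j}u^j$ with $c=\pm1$. The coefficient of $u^{-l}$, for $1\le l\le a$, comes from $j=a-l$ and equals $\binom{a+b-1-l}{b-1}(-1)^{a-l}c^{-(a+b-l)}$. Symmetrically, at $u=-c$ set $v=u+c$ and expand $u^{-a}=(v-c)^{-a}$. The coefficient of $v^{-l}$, for $1\le l\le b$, is $\binom{a+b-1-l}{a-1}$ times a sign that is a power of $-c$. With $c=+1$ the signs collapse to $(-1)^{a+b+l}$ after pulling out the global $(-1)^b$; with $c=-1$ they collapse to $+1$. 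I will carry out this bookkeeping using the form $\frac{(-1)^b}{(mi)^{a+b}}\sum_k w_k^{-a}(\mp1-w_k)^{-b}$ already written just before the lemma. The binomial coefficients are forced by the Laurent expansion. The main risk is a sign slip when matching the stated form, so I will check the cases $a=b=1$ and $a=2,b=1$ by hand.

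Next I sum over $k\in\zz$. The terms $u^{-l}=w_k^{-l}$ give $\sum_k w_k^{-l}=(mi)^l\zeta_l(w)$, from the definition of $\zeta_l$; this is the source of the factors $(mi)^l$. The terms $(w_k\pm1)^{-l}$ are shifts $w_{k\mp1}^{-l}$. Reindexing $k\mapsto k\pm1$ turns them into the same sum $(mi)^l\zeta_l(w)$, which explains why both sums in the lemma involve $\zeta_l$ of the same argument.

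The main obstacle is convergence, which only matters at $l=1$. For $l\ge2$ every series converges absolutely, so splitting and reindexing is harmless. For $l=1$, $\sum_k w_k^{-1}$ converges only symmetrically, as the cotangent expansion, and a shift by one index changes a symmetric partial sum by boundary terms $\pm 1/w_{\pm K}$, which tend to $0$. So the reindexing remains valid under symmetric summation. Alternatively, the two $l=1$ coefficients are the residues at the two poles of a function decaying like $u^{-2}$, so they sum to zero. Then the $l=1$ contributions cancel termwise, and the original absolutely convergent sum may be split without issue. I would present the second argument, since it avoids any conditional convergence, and then assemble the two sums to obtain the formulas for $\zeta^+_{a,b}$ and $\zeta^-_{a,b}$.
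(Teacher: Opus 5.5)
Your method (partial fractions in $u=w_k$, then summing over $k$) is the paper's method, and your handling of the $l=1$ terms is fine. The step that fails is the sign bookkeeping at the pole $u=-c$. The signs in the \emph{second} sums do not collapse to $(-1)^{a+b+l}$ and $+1$, and no bookkeeping can make them do so, because the identity as stated is false.

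Carry out your own expansion with $v=u+c$: $(v-c)^{-a}=(-c)^{-a}\sum_{j\ge 0}\binom{a+j-1}{j}c^{-j}v^{j}$. So the coefficient of $(u+c)^{-l}$ is $\binom{a+b-1-l}{a-1}(-1)^{a}c^{a+b+l}$. After extracting $(-1)^b$, this is $(-1)^{a+b}$ for $c=+1$ and $(-1)^{l}$ for $c=-1$. Both differ from the statement by exactly $(-1)^l$; the first sums are correct.

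The check $a=b=1$ that you planned settles it. By telescoping, $\zeta^{\pm}_{1,1}(x)=\sum_k \frac{1}{x_k x_{k\mp1}}=\pm\frac{1}{mi}\sum_k\big(\frac{1}{x_k}-\frac{1}{x_{k\mp1}}\big)=0$, whereas the lemma gives $\pm\frac{2}{mi}\zeta_1$. Your residue observation also contradicts the statement in general. It forces the $\zeta_1$-coefficient of $\zeta^{\pm}_{a,b}$ to vanish, but the stated formulas have $\zeta_1$-coefficients $2\binom{a+b-2}{a-1}(-1)^{a+1}(mi)^{1-a-b}$ and $2\binom{a+b-2}{a-1}(-1)^{b}(mi)^{1-a-b}$ respectively.

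What your argument actually proves is the corrected identity
\[
\zeta^{+}_{a,b}=\frac{(-1)^b}{(mi)^{a+b}}\Big\{\sum_{l=1}^{a}\binom{a+b-1-l}{b-1}(-1)^{a+b+l}(mi)^l\zeta_l+(-1)^{a+b}\sum_{l=1}^{b}\binom{a+b-1-l}{a-1}(mi)^l\zeta_l\Big\},
\]
\[
\zeta^{-}_{a,b}=\frac{(-1)^b}{(mi)^{a+b}}\Big\{\sum_{l=1}^{a}\binom{a+b-1-l}{b-1}(mi)^l\zeta_l+\sum_{l=1}^{b}\binom{a+b-1-l}{a-1}(-1)^{l}(mi)^l\zeta_l\Big\}.
\]

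The stated coefficients are what you get by decomposing in the variable $\mp1-w_k$, which is the form you said you would use. With $d=\mp1$,
\[
\frac{1}{u^a(d-u)^b}=\sum_{l=1}^{a}\binom{a+b-1-l}{b-1}\frac{d^{\,l-a-b}}{u^l}+\sum_{l=1}^{b}\binom{a+b-1-l}{a-1}\frac{d^{\,l-a-b}}{(d-u)^l},
\]
and $d^{\,l-a-b}$ equals $(-1)^{a+b+l}$ or $1$. But $\mp1-w_k=-w_{k\mp1}$, so $\sum_k(\mp1-w_k)^{-l}=(-1)^l(mi)^l\zeta_l$, not $(mi)^l\zeta_l$. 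Your reindexing sentence is correct for $(w_k\pm1)^{-l}$, but it silently drops this $(-1)^l$ if applied to $(\mp1-w_k)^{-l}$.

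So you should prove and state the corrected formulas rather than the lemma as written. The correction matters downstream. It replaces the factor $(-1)^{b+1}+(-1)^{a+1}$ in the paper's computation of $\alpha_n$ by $2(-1)^{b+1}$. On the contributing terms ($a+b+n$ even) the two agree only when $n$ is even.
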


From this form, we can write
\begin{align*}
	\zeta^+_{a+1,b+1}+\zeta^-_{a+1,b+1} &= \frac{(-1)^{b+1}}{(im)^{a+b+2}}\bigg( \sum_{l=1}^{a+1} {a+b+1-l\choose b}(mi)^l[(-1)^{a+b+l}+1]\zeta_l \\ &\qquad+ \sum_{l=1}^{b+1}{a+b+1-l\choose a}(mi)^l[(-1)^{a+b+l}+1]\zeta_l\bigg)
\end{align*}
and so
\begin{align*}
	\text{(b)}&= \sum_{a,b=0}^\infty \mu_a\mu_b \frac{(-1)^{b+1}}{(mi)^{a+b+2}}\sum_{l=1}^{a+1}{a+b+1-l\choose b}(mi)^l[(-1)^{a+b+l}+1]\zeta_l \\ &\qquad+ \sum_{a,b=0}^\infty \mu_a\mu_b \frac{(-1)^{b+1}}{(mi)^{a+b+2}}\sum_{l=1}^{b+1}{a+b+1-l\choose a}(mi)^l[(-1)^{a+b+l}+1]\zeta_l  \\
	&= \sum_{a,b=0}^\infty \mu_a\mu_b \frac{(-1)^{b+1}+(-1)^{a+1}}{(mi)^{a+b+2}}\sum_{l=1}^{a+1}{a+b+1-l\choose b}(mi)^l[(-1)^{a+b+l}+1]\zeta_l
\end{align*}
by exchanging $a$ and $b$ in the second term. We want to extract the $n$th coefficient of the above, so we write $\text{(b)} = \sum_{n=1}^\infty \alpha_n \zeta_n$, and seek $\alpha_n$; thus for all $a\geq n-1$, only the $n$th term in $\sum_{l=1}^{a+1}{a+b+1-l\choose b}(mi)^l[(-1)^{a+b+l}+1]\zeta_l$ contributes, i.e.
\begin{align*}
	\alpha_n &= \sum_{b=0}^\infty\sum_{a=n-1}^\infty \mu_a\mu_b \frac{(-1)^{b+1}+(-1)^{a+1}}{(mi)^{a+b+2}}{a+b+1-n\choose b}(mi)^n[(-1)^{a+b+n}+1].
\end{align*}
We also re-index using $2c=a+b+n$, since only terms with even $a+b+n$ contribute. Furthermore, $c\geq n$ and $b \leq 2c-2n+1$, thus
\begin{align*}
	\alpha_n &= 2 \sum_{c=n}^\infty\sum_{b=0}^{2c-2n+1}\mu_{2c-b-n}\mu_b\frac{(-1)^{b+1}+(-1)^{b-n+1}}{(mi)^{2c-2n+2}}{2c-2n+1\choose b} \\
	&= 2\sum_{c=0}^\infty \sum_{b=0}^{2c+1} \mu_{2c+n-b}\mu_b\frac{(-1)^{b+1}+(-1)^{b-n+1}}{(mi)^{2c+2}}{2c+1\choose b} \\
	&= 2(1+(-1)^n)\sum_{c=0}^\infty \sum_{b=0}^{2c+1} \mu_{2c+n-b}\mu_b\frac{(-1)^{b+c}}{m^{2c+2}}{2c+1\choose b}.
\end{align*} 
Finally, we write
\begin{align*}
	H &= \sum_{n=1}^\infty\bigg\{ \sum_{b=0}^{n-2} \mu_b\mu_{n-b-2}  - \frac{\beta_2}{2}(1+(-1)^n)\sum_{c=0}^\infty\sum_{b=0}^{2c+1}{2c+1\choose b}\frac{(-1)^{b+c}}{m^{2c+2}} \mu_{2c+n-b}\mu_b  \\&\qquad-\frac{4}{\beta}\sum_{j=0}^{\deg V'}\mu_{n+j-1}g_{j+1}\bigg\}\zeta_n.
\end{align*}
Such an equation is only possible if each coefficient of $\zeta_n$ are zero, including the constant coefficient $H$ itself, hence we arrive at the SDE. 

\begin{theorem}
	For all $n\geq 1$ we have
	\begin{align*}
		0 &= \sum_{b=0}^{n-2} \mu_b\mu_{n-b-2}  - \frac{\beta_2}{2}(1+(-1)^n)\sum_{c=0}^\infty\sum_{b=0}^{2c+1}{2c+1\choose b}\frac{(-1)^{b+c}}{m^{2c+2}} \mu_{2c+n-b}\mu_b  \\&\qquad-\frac{4}{\beta}\sum_{j=0}^{\deg V'}\mu_{n+j-1}g_{j+1}.
	\end{align*}
\end{theorem}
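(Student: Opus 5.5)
The plan is to read the theorem off the two propositions above: $H$ is entire and constant, and $H$ also equals the formal expansion $\sum_{n\ge1}\alpha_n^{\mathrm{tot}}\zeta_n$ computed just before the statement. The claim is that the $\zeta_n$ are "independent" in a sense strong enough to force every coefficient to vanish. The expansions of $A$, $B$, $C$ are only valid where every $W(x_k)$ may be expanded in moments, that is where $|x_k|>L$ for all $k$. Taking $|\re{x}|>L$ suffices, since then $|x_k|\ge|\re{x}|>L$ for every $k$. On this region all the double sums converge absolutely, using $|\mu_n|\le L^n$ and the bounds \eqref{fbound}. This justifies the rearrangements into $\zeta_a$ and $\zeta^\pm_{a,b}$, and the use of the Lemma, at least for $\re{x}$ large. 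So for $\re{x}>L$ we have the identity
\[
H\equiv \mathrm{const}=\sum_{n\ge1} c_n\,\zeta_n(x),
\]
where $c_n$ is the bracketed expression in the theorem.

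To extract the $c_n$, I would use the behaviour of $\zeta_n$ near a pole rather than at infinity. The difficulty is that the identity only holds for $|\re{x}|>L$, while the poles $kmi$ lie on the imaginary axis. The first attempt is therefore to examine the limit $\re{x}\to+\infty$ instead. Writing $q=e^{-2\pi x/(mi)}$, one has $\zeta_1(x)=\frac{\pi}{mi}\cot\frac{\pi x}{mi}$ and $\zeta_{n+1}=-\frac1n\zeta_n'$. Hence each $\zeta_n$ is a polynomial in $\cot$ and $\csc^2$, and as $\re{x}\to+\infty$ (that is, $|\im{x/(mi)}|\to\infty$) we get $\zeta_1\to\mp\pi/m$ while $\zeta_n\to0$ exponentially for $n\ge2$.

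This yields only one relation, so the independence argument has to be run in another way. The plan is to introduce a free shift parameter. Equivalently, I would apply the whole construction at the level of the sheet structure: the function $\sum_n c_n\zeta_n$ extends, through its series in $1/x_k$, to a meromorphic $im$-periodic function with poles of order $n$ at $x=kmi$ and principal part $\sum_n c_n x^{-n}$ at $0$.

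The key step is to make this continuation rigorous. I would regard $\Phi(x):=\sum_n c_n\zeta_n(x)-\sum_n c_n x^{-n}$ for $|x|$ large relative to $L$, noting that $\sum_n c_n x^{-n}$ converges for $|x|>L'$ by the moment bounds. In that region, $H-\Phi$ equals precisely the $k=0$ contribution $f_0$ expanded at infinity, together with its neighbour terms. Comparing with the $k$-sum definition of $H$ and its periodicity, one would find that the Laurent expansion at infinity of $\sum_{k}(f_k)$ restricted to $k=0$ gives $\sum_n c_n x^{-n}$ exactly. Constancy of $H$ combined with periodicity then forces the principal-part data $\sum_n c_n x^{-n}$ to be zero as a convergent Laurent series. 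By uniqueness of Laurent coefficients, every $c_n=0$, which is the theorem; the constant term also vanishes, consistent with $H\to0$ along suitable directions.

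I expect the main obstacle to be exactly this separation of the single $k=0$ contribution. The cotangent term $G_2\zeta_1$ and the $B$-terms couple neighbouring $k$. One must therefore check carefully that the reorganization of $\sum_{a,b}\mu_a\mu_b(\zeta^+_{a+1,b+1}+\zeta^-_{a+1,b+1})$ into $\sum\alpha_n\zeta_n$ via the Lemma is absolutely convergent. This requires the factor $m^{-2c-2}$ together with $\binom{2c+1}{b}L^{2c+n}$, which converges only when $2L<m$. I would first prove the theorem under this small-support assumption and then argue by analytic continuation in $m$, or else state it as a formal identity.
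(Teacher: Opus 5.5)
\paragraph{Main gap: the independence step.} Your limit $\re{x}\to\pm\infty$ is fine, but it only gives $c_1=0$ and $H\equiv 0$. The gap is in the next step, and it has two parts.

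First, you assert that $\sum_n c_n\zeta_n$ continues to a periodic meromorphic function with principal part $\sum_n c_nx^{-n}$ at $0$. This fails once infinitely many $c_n\neq0$: that series is then not the principal part of any pole, and with moment-type growth of the $c_n$ it does not even converge near $0$.

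Second, you assert that constancy plus periodicity forces $\sum_n c_nx^{-n}=0$. This is false, because the $\zeta_n$ are not independent under convergent infinite combinations. For $|\re{x}|>m$, with absolute convergence,
\[
\sum_{n\ge2}(mi)^{n-2}\zeta_n(x)=\sum_{k\in\zz}\frac{1}{x_kx_{k+1}}=\frac{1}{mi}\sum_{k\in\zz}\left(\frac{1}{x_{k+1}}-\frac{1}{x_k}\right)=0 .
\]
So a representation $H=\sum_n c_n\zeta_n$ does not determine the $c_n$. Concretely, Laurent-expanding the summand $f_0$ at infinity (your ``$k=0$ contribution'') gives one coefficient sequence for the same $H$. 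Rewriting $W(x_k)W(x_{k+1})$ by partial fractions and re-indexing $k$ gives a different one. The display above is exactly $\zeta^-_{1,1}$ computed both ways. Constancy and periodicity alone cannot decide which sequence must vanish. The paper simply asserts the independence at this point, so an argument is genuinely needed, and yours does not supply it.

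\paragraph{The missing idea.} Realise the $c_n$ as Laurent coefficients at $\infty$ of a single function holomorphic on $\cc\setminus\Sigma$ (one period cell), then apply Liouville.
\begin{itemize}
\item Split $W(y)W(y-mi)=u(y)+v(y-mi)$, where $u(y)=\int_\Sigma\frac{\rho(s)W(s-mi)}{y-s}\dif s$ and $v(y)=\int_\Sigma\frac{\rho(s)W(s+mi)}{y-s}\dif s$. This is the continuous form of the Lemma.
\item Then $H(x)=\sum_k\tilde f(x_k)$ (symmetric partial sums), where $\tilde f=W^2-\frac{\beta_2}{2}(u+v)-\frac4\beta(V'W-R)$ is holomorphic off $\Sigma$ and $O(1/y)$.
\item If $H\equiv C$, then on the strip $|\im{x}|<m$ minus $\Sigma$ we have $\tilde f=C-\sum_{k\neq0}\tilde f(x_k)$. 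The right side is holomorphic across $\Sigma$.
\item Hence $\tilde f$ is entire and vanishes at $\infty$, so $\tilde f\equiv0$ and all its Laurent coefficients vanish.
\end{itemize}
A shorter, more elementary route is to multiply the saddle point equation by $x^{n-1}\rho(x)$ and integrate over $\Sigma$. Either way, the $\beta_2$-term comes out as
\[
-\beta_2\iint_{\Sigma^2}\frac{s^{n-1}(s-t)}{m^2+(s-t)^2}\rho(s)\rho(t)\,\dif s\,\dif t ,
\]
which is valid for every $m>0$.

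\paragraph{Convergence and the Lemma.} Your $2L<m$ concern is correct: turning this kernel into the moment series requires $\mathrm{diam}\,\Sigma<m$. The remedy is to keep the kernel form. Analytically continuing a divergent series in $m$ does not make sense. When the expansion is valid it gives $-\beta_2\sum_{c}\sum_b\binom{2c+1}{b}\frac{(-1)^{b+c}}{m^{2c+2}}\mu_{2c+n-b}\mu_b$ for every $n$. This agrees with the displayed statement for even $n$, and for odd $n$ when $\rho$ is even, but not in general.

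The factor $(1+(-1)^n)$ traces to a sign slip in the Lemma: its second sums need an extra $(-1)^l$. You can check this on $\zeta^-_{1,1}$, which telescopes to $0$, whereas the printed formula gives $-\frac{2}{mi}\zeta_1$. So your reliance on the Lemma as printed is a second point that needs repair.
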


So far, we have not used any properties of the potential $S$, besides that the choice of $S$ must have an equilibrium measure with compact support. Let us assume that $S$ is even and therefore $V$ is even, then
\begin{align}
	0 &= \sum_{b=0}^{2n-2} \mu_b\mu_{2n-b-2}  - \frac{\beta_2}{2}(1+(-1)^{2n})\sum_{c=0}^\infty\sum_{b=0}^{2c+1}{2c+1\choose b}\frac{(-1)^{b+c}}{m^{2c+2}} \mu_{2c+2n-b}\mu_b \nonumber\\&\qquad -\frac{4}{\beta}\sum_{j=0}^{\deg V'}\mu_{2n+j-1}g_{j+1}\nonumber \\
	&= \sum_{b=0}^{n-1} \mu_{2b}\mu_{2n-2b-2}  - \beta_2\sum_{c=0}^\infty\sum_{b=0}^{c}{2c+1\choose 2b}\frac{1}{(-m^2)^{c+1}} \mu_{2c+2n-2b}\mu_{2b} \nonumber\\&\qquad -\frac{4}{\beta}\sum_{j=1}^{\deg V/2}\mu_{2n+2j-2}g_{2j} \label{fermionloopeqs}
\end{align}
for all integers $n\geq 1$. 

\subsection{Perturbative Analysis}

We now wish to examine the SDE perturbatively. What we find are converge series solutions in the examples studied. Note that we are interested in potentials whose coefficients may depend on the moments up to some finite order. For the quartic Dirac ensemble under consideration, the derivative of the potential which appears is $8g_4x^3 + (24g_4 \mu_2 +4g_2)x$. For this reason, we consider potentials of the form 
\[ V(x)=\sum_{j=1}^{d} \frac{\widetilde{t}_{2j}}{2j}x^{2j} \]
where 
\[ \widetilde{t}_{2j} = \bigg( t_{0,2j} + t_{2,2j}\mu_2 + \cdots t_{2d,2j}\mu_{2d}\bigg)  = \sum_{i=0}^{d} t_{2i,2j}\mu_{2i}. \]
We recover the quartic $(0,1)$ Dirac ensemble when $t_{0,2}=4g_2$, $t_{2,2}=24g_4$, $t_{0,4}=8g_4$, while all other $t_{2i,2j}=0$. We will expand in $t_2 \coloneqq t_{0,2}$. The SDE, for $n\geq 1$, is given by
\begin{align*}
	\frac{4}{\beta}\bigg(&t_2\mu_{2n}+\sum_{\substack{i=0\\ j=1}}^{d}{\vphantom{\sum}}'t_{2i,2j} \mu_{2n+2j-2}\mu_{2i}\bigg)\\&\qquad=  \sum_{b=0}^{n-1} \mu_{2b}\mu_{2n-2b-2}  - \beta_2\sum_{c=0}^\infty\sum_{b=0}^{c}{2c+1\choose 2b}\frac{1}{(-m^2)^{c+1}} \mu_{2c+2n-2b}\mu_{2b}. 
\end{align*}
The primed sum $\sum'$ indicates that we exclude the term where $i=0,j=1$ from the sum. We now define rescaled moments $M_j \coloneqq t_2^j \mu_{2j}$, so that the SDE is written as
\begin{align*}
	\frac{4}{\beta}\bigg(&M_n+\sum_{\substack{i=0\\ j=1}}^{d}{\vphantom{\sum}}' \frac{t_{2i,2j}}{t_2^{i+j}} M_{n+j-1}M_{i}\bigg)\\&\qquad=  \sum_{b=0}^{n-1}M_{b}M_{n-b-1}  - \beta_2\frac{1}{t_2}\sum_{c=0}^\infty\frac{1}{(-m^2)^{c+1}t_2^c}\sum_{b=0}^{c}{2c+1\choose 2b}  M_{c+n-b}M_{b} .
\end{align*}
We seek a solution of the form
\[ M_n = \sum_{k=0}^\infty \frac{1}{t_2^k}M_n^k \]
and study the SDE order by order in $t_2^{-1}$. Note that $M_0 = \mu_0 = \int_\Sigma \rho(x)\dif x = 1$, i.e. $M_0^k = 0$ for all $k\geq 1$. To this end, we introduce the formal generating function
\[ G(x) \coloneqq \sum_{n=0}^\infty M_n x^n = \sum_{k=0}^\infty \frac{1}{t_2^k} G^k(x) \]
where $G^k(x) = \sum_{n=0}^\infty M_n^k x^n$. We now multiply the $n$th SDE by $x^n$, and sum over $n=1,2,\dots$ to get the SDE in terms of the generating function $G(x)$. The left hand side becomes
\begin{align*}
	LHS &= \frac{4}{\beta}\bigg(\sum_{n=1}^\infty x^nM_n + \sum_{n=1}^\infty x^n\sum_{\substack{i=0\\ j=1}}^{d}{\vphantom{\sum}}' \frac{t_{2i,2j}}{t_2^{i+j}}M_{n+j-1}M_i \bigg)\\ &= \frac{4}{\beta}\bigg( G(x)-1 + \sum_{\substack{i=0\\ j=1}}^{d}{\vphantom{\sum}}' \frac{t_{2i,2j}}{t_2^{i+j}}\sum_{n=j}^\infty x^{n+1-j}M_n M_i\bigg) \\
	&= \frac{4}{\beta}\bigg( G(x)-1 + \sum_{\substack{i=0\\ j=1}}^{d}{\vphantom{\sum}}' \frac{t_{2i,2j}}{t_2^{i+j}}x^{1-j}M_i \bigg[ G(x) - \sum_{n=0}^{j-1} x^nM_n \bigg]\bigg)
\end{align*}
and the right hand side is now
\begin{align*}
	RHS &=  \sum_{n=1}^\infty x^n \sum_{b=0}^{n-1}M_{b}M_{n-b-1}  - \beta_2\frac{1}{t_2}\sum_{n=1}^\infty x^n\sum_{c=0}^\infty\frac{1}{(-m^2)^{c+1}t_2^c}\sum_{b=0}^{c}{2c+1\choose 2b}  M_{c+n-b}M_{b}  \\
	&= x\sum_{n=0}^\infty  \sum_{b=0}^n x^nM_bM_{n-b} - \beta_2\frac{1}{t_2}\sum_{c=0}^\infty \frac{1}{(-m^2)^{c+1}t_2^c}\sum_{b=0}^c {2c+1\choose 2b} M_b \sum_{n=1+c-b}^\infty x^{n+b-c}M_n \\
	&= xG(x)^2 - \beta_2\frac{1}{t_2}\sum_{c=0}^\infty \frac{1}{(-m^2)^{c+1}t_2^c}\sum_{b=0}^c {2c+1\choose 2b} x^{b-c} M_b \bigg[ G(x)-\sum_{n=0}^{c-b}x^n M_n \bigg].
\end{align*}
We now solve for the first few orders of $t_2^{-1}$. At $O(1)$ we have:
\begin{align}
	\frac{4}{\beta}(G^0(x)-1)&=xG^0(x)^2 \nonumber\\ G^0(x) &= \frac{2}{\beta x} \bigg(1 - \sqrt{1-\beta x}\bigg). \label{o1eqn}
\end{align}
Note that we took the $-$ sign so that $M_0=1$. At $O(t_2^{-1})$ we have:
\begin{align}
	\frac{4}{\beta} G^1(x) &= 2xG^0(x)G^1(x) + \frac{\beta_2}{m^2}(G^0(x)-1) \nonumber\\
	G^1(x) &= \frac{\frac{\beta_2}{ m^2}(G^0(x)-1)}{\frac{4}{\beta}-2xG^0(x)}.\label{o2eqn}
\end{align}
At $O(t_2^{-2})$ we find:
\begin{align*}
	G^2(x) &= \frac{1}{\frac{4}{\beta}-2xG^0}\cdot \bigg( \frac{\beta_2G^1}{m^2} - \frac{\beta_2}{m^4}\bigg[ \frac{1}{x}(G^0-1-xM_1^0) + 3M_1^0(G^0-1)\bigg]  \\ &+ x(G^1)^2 -\frac{4}{\beta}\bigg[ t_{2,2}M_1^0(G^0-1) +\frac{t_{0,4}}{x}(G^0-1-xM_1^0) \bigg]\bigg) 
\end{align*}
Higher order calculations are similar, but get more complex rapidly. \\

Let us now consider the simple Gaussian case, i.e. $t_{2i,2j}=0$ for all $i,j$, except $t_{0,2}\coloneqq t_2$, in which case $V'(x)=t_2 x$. The rescaled SDE for the generating function $G$ is
\begin{align*}
	\frac{4}{\beta}(G(x)-1) = xG(x)^2 - \beta_2\frac{1}{t_2}\sum_{c=0}^\infty \frac{1}{(-m^2)^{c+1}t_2^c}\sum_{b=0}^c {2c+1\choose 2b} x^{b-c} M_b \bigg[ G(x)-\sum_{n=0}^{c-b}x^n M_n \bigg].
\end{align*}
At $O(1)$ and $O(t_2^{-1})$ we again arrive at \ref{o1eqn} and \ref{o2eqn} respectively. At $O(t_2^{-2})$, they become
\begin{align*}
	G^2(x) &= \frac{x(G^1)^2 + \bigg(\frac{\beta_2G^1}{m^2} - \frac{1}{m^4}(x^{-1}[G^0-1-xM_1^0] + 3M_1^0[G^0-1]) \bigg)}{\frac{4}{\beta}-2xG^0}.
\end{align*}
At $O(t_2^{-3})$, we again proceed along the same lines. The contributing terms for each $c=0,1,2$ are:
\begin{align*}
	c = 0 :\quad& \frac{1}{(-m^2)}[G^2(x)]\\ 
	c = 1 :\quad& \frac{1}{(-m^2)^2} \bigg( x^{-1}[G^1(x)-xM_1^1] + 3(M_1^1[G^0(x)-1]+M_1^0[G^1(x)])\bigg)\\
	c = 2 :\quad& \frac{1}{(-m^2)^3} \bigg( x^{-2}[G^0(x)-1-xM_1^0-x^2M_2^0]\\ &\qquad\qquad+10x^{-1}M_1^0[G^0(x)-1-xM_1^0]+5M_2^0[G^0(x)-1] \bigg),
\end{align*}
we label them $c_0,c_1,c_2$ respectively. We then find that
\begin{align*}
	\frac{4}{\beta}G^3(x) &= 2x(G^0G^3 + G^1G^2) - \beta_2(c_0 + c_1 + c_2) \\
	G^3(x) &= \frac{2xG^1G^2 - \beta_2(c_0+c_1+c_2)}{\frac{4}{\beta}-2xG^0}.
\end{align*}

This procedure can continue to be carried out to obtain successively higher order corrections to the moments. 

\section{The Gaussian model}\label{sec: Gaussian}

In the case of the Gaussian potential $V(z)=\frac{g_2}{2}z^2$ with $\Sigma = [-L,L] =  \text{Supp} \rho$, there is a more direct method to investigate the saddle point equation \eqref{saddle} when $\beta_2 = \beta=2$, in which case \ref{saddle} reads (with $m=1$):
\begin{align}
	2g_2 z = W(z+i0) + W(z-i0) + [W(z+i) + W(z-i)] \label{fermionic saddle}
\end{align}
for the fermionic matrix integral, and for the bosonic case reads
\begin{align}
	2g_2z = W(z+i0) + W(z-i0) - [W(z+i) + W(z-i)]. \label{bosonic saddle} 
\end{align}
The bosonic case is exactly what is studied in \cite{kostov2002exact} and \cite{kazakov1999d}, and even earlier in \cite{hoppe1989quantum}, and is sometimes referred to as the Hoppe model for that reason. The idea is to use the saddle point equation to create an analytic function which will turn out to be the inverse of a Schwarz-Christoffel transformation, allowing one to express quantities of interest, such as the second moment $\mu_2$, in terms of elliptic integrals. We will proceed along the same lines as \cite{kazakov1999d} while spelling out mathematically rigorous details. 

Note that we are able to write the second moment of this model explicitly in terms of elliptic functions in terms of $g_2$. From the second moment we are able to obtain the free energy since in the limit by integrating both sides of the identity
\begin{equation*}
    \frac{\partial}{\partial g_{2}}\ln Z = \mu_{2}(g_{2}).
\end{equation*}
\subsection{Bosonic Function}

We begin with the bosonic case, as it is simpler and already appears in the literature. Recall that the saddle point equation in this case is 
\begin{align*}
	2g_2 z = W(z+i0) + W(z-i0) - [W(z+i) + W(z-i)].
\end{align*}
We now define a function
\begin{align*}
	B(z) \coloneqq g_2 z^2 + i \bigg[W(z+i/2)-W(z-i/2)\bigg]
\end{align*}
which results in a saddle point equation for $B$: for $a\in\Sigma$
\begin{align*}
	B(a+i/2 + i0) = B(a-i/2-i0).
\end{align*}
Indeed,
\begin{align*}
	B(a+ i/2+i0) &=  g_2(a^2+ia-1/4) +i\bigg[W(a+i)-W(a+i0)\bigg] \\
	B(a-i/2-i0)  &= g_2(a^2 - ia - 1/4) +i\bigg[W(a-i0)-W(a-i)\bigg] 
\end{align*}
and so 
\begin{align*}
	B(a+i/2&+i0)-B(a-i/2-i0)\\ &= \underbrace{2ig_2a + i\bigg[-W(a+i0)-W(a-i0)+W(a+i)+W(a-i)\bigg]}_{\equiv 0\text{ for $a\in\Sigma$}}.
\end{align*}
Since $\rho$ is even, it follows that $W(z)$ is odd $W(-z)=-W(z)$, and we also have $W(\overline{z})=\overline{W(z)}$, since
\begin{align*}
	W(\overline{z})=\int_\Sigma \frac{\rho(y)\dif y}{\overline{z}-y} = \overline{\int_\Sigma \frac{\rho(y)\dif y}{z-y}}=\overline{W(z)}.
\end{align*}
It follows that $B(\overline{z})=\overline{B}(z)$, and $B(-z)=B(z)$, since
\begin{align*}
	B(\overline{z}) &= g_2\overline{z}^2 + i\bigg[W(\overline{z}+i/2)-W(\overline{z}-i/2)\bigg]= g_2\overline{z^2} + i\bigg[\overline{W(z-i/2)}-\overline{W(z+i/2)}\bigg] = \overline{B(z)} \\
	B(-z) &= g_2z^2 +i\bigg[W(-z+i/2)-W(-z-i/2)\bigg] = g_2z^2 +i\bigg[-W(z-i/2)+W(z+i/2)\bigg]=B(z)
\end{align*}
These properties imply that $B(z)$ is real for $z\in \rr$, $z\in i\rr$ and $z=\pm\frac{i}{2}+a\pm i0$ for $a\in \Sigma$. Indeed, if $x\in \rr$, then
\begin{align*}
	\overline{B(x)} = B(\overline{x})=B(x)
\end{align*}
similarly, if $z=iy$ for $y \in \rr$, then
\begin{align*}
	\overline{B(iy)}=B(\overline{iy})=B(-iy)=B(iy)
\end{align*}
and if $z=\pm\frac{i}{2}+a\pm i0$ for $a\in \Sigma$ then, by the saddle point equation \ref{bosonic saddle} we have
\begin{align*}
	\overline{B(a\pm i/2 \pm i0)} = B(a \mp i/2 \mp i0) = B(a \pm i/2 \pm i0).
\end{align*}

This leads us to consider the polygonal domain $U\subset \cc$, bounded by $\rr_+$, $i\rr_+$ and the segment $\frac{i}{2}+a$ for positive $a\in \Sigma$. In fact, it turns out that the behavior of $B$ is enough to conclude that it is a biholomorphism $B:U\to \hh$.

\begin{figure}\label{bosonicdomain}
\begin{center}
	\begin{tikzpicture}
		\draw[->] (0,0)--(5,0);
		\draw[->] (0,0)--(0,5.1);
		\draw (0,2)--(2,2);
		\draw (-.5,0) node {$0$};
		\draw (3.5,3.5) node {\huge{$U$}};
		\draw (6,0) node {$\re z$};
		\draw (0,5.5) node {$\im z$};
		\draw (2.5,2) node {$\frac{i}{2}+a$};
		\draw (-.5,2) node {$\frac{i}{2}$};
	\end{tikzpicture}
\end{center}
\caption{Domain for the bosonic function $B$}
\end{figure}

\subsection{Fermionic Function}

The fermionic saddle point equation reads
\begin{align*}
	2g_2z = W(z+i0) + W(z-i0) + [W(z+i) + W(z-i)]. 
\end{align*} 
We define the function
\begin{align*}
	F(z) \coloneqq g_2z + i[ W(iz + i/2) + W(iz - i/2) 
\end{align*} 
which gives us a saddle point equation for $F$: for $a\in \Sigma$ we have
\begin{align*}
	F(ia+1/2\pm 0) = -F(ia - 1/2 \mp 0) 
\end{align*} 
Again, we verify directly,
\begin{align*}
	F(ia + 1/2 \pm 0 ) &= ig_2a + \frac{1}{2} + i[W(-a+i)+W(-a\pm i0)] = ia + \frac{1}{2} -i[W(a-i)+W(a\mp i0)]\\
	F(ia - 1/2 \pm 0)  &= ig_2a - \frac{1}{2} + i[W(-a\pm i0)+W(-a-i)]= ia - \frac{1}{2} - i[W(a\mp i0)+W(a+i)],
\end{align*}
thus
\begin{align*}
	F(ia+1/2&\pm 0) + F(ia - 1/2 \mp 0) \\&
	= i\bigg\{	2g_2a - [W(a-i)+W(a\mp i0) + W(a\pm i0) + W(a+i)]  \bigg\} = 0 	
\end{align*}
for any $a\in \Sigma$, by the saddle point equation.

We likewise show that $F(x)$ is real for $x\in \rr$,
\begin{align*}
	\overline{F(x)} &= g_2 x - i [ \overline{W(ix+i/2)+W(ix-i/2)} ]
	= g_2x + i[W(ix+i/2)+W(ix-i/2)] = F(x).
\end{align*}
Also, by observing that for $a\in \Sigma$
\begin{align*}
	\overline{F(ia+1/2\pm 0)} &= -ig_2a + \frac{1}{2} + i[W(a+i)+W(a\pm i0)] \\
	\overline{F(ia-1/2\pm 0)} &= -ig_2a - \frac{1}{2} +i[W(a\pm i0)+W(a-i)]
\end{align*}
we see that
\begin{align*}
	F(ia&+1/2\pm 0)-\overline{F(ia+1/2\pm 0)} \\
	&= i(2g_2a - i [W(a-i)+W(a\mp i0)+W(a+i)+W(a\pm i0)]) = 0 \\
	F(ia&-1/2\pm 0)-\overline{F(ia+1/2\pm 0)} \\
	&= i(2g_2a - i [W(a\pm i0)+W(a+i)+W(a\pm i0)+W(a-i)]) = 0
\end{align*}
showing that $F$ is real along the cuts $\pm1/2 + i\Sigma$. Again, we are led to a polygonal domain $V$ which is bounded by the real axis, as well as the segments $1/2 +ai$ and $-1/2+ai$ for $a$ positive and in $\Sigma$. Again, we will be able to conclude that $F:V\to\hh$ is a biholomorphism. 

\begin{figure}[h]\label{fermionicdomain}
\begin{center}
	\begin{tikzpicture}
		\draw[<->] (-5.1,0)--(5.1,0);
		\draw (-2,0)--(-2,2);
		\draw (-2,2.5) node {$-\frac{1}{2}+ai$};
		\draw (2,0)--(2,2);
		\draw (2,2.5) node {$\frac{1}{2}+ai$};
		\draw (-2,-.5) node {$-\frac{1}{2}$};
		\draw (2,-.5) node {$\frac{1}{2}$};
		\draw (0,-.5) node {$0$};
		\draw (3.5,3.5) node {\huge{$V$}};
		\draw (6,0) node {$\re z$};
	\end{tikzpicture}
\end{center}
\caption{Domain for the fermionic function $F$}
\end{figure}

\subsection{$B$ and $F$ as inverse Schwarz-Christoffel Maps}

We are now ready to show that the functions $B$ and $F$ constructed above are indeed biholomorphisms which are given by the inverse of a Schwarz-Christoffel integral. 

\begin{prop}
	The functions $B$ and $F$ are biholomorphisms from the domains $U$ and $V$ respectively to $\hh$. 
\end{prop}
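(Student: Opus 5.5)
The proof will use the argument principle for univalence on a polygonal Jordan domain. If a function is holomorphic on a Jordan domain, extends continuously to the closure (including the point at infinity), and maps the boundary homeomorphically and in the positive sense onto $\rr\cup\{\infty\}$, then it is a biholomorphism onto the half-plane lying to the left of the boundary image. We treat $B$ on $U$ first; $F$ on $V$ follows the same pattern.

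\textbf{Step 1: holomorphy and boundary values.} $W$ is holomorphic off $\Sigma$. The shifted copies $\Sigma\pm i/2$ meet $\overline U$ only along the slit $\frac i2+(\Sigma\cap\rr_+)$. The function $z\mapsto W(z-i/2)$ has its cut on $\Sigma+i/2$, which is exactly this boundary slit. Hence $B$ is holomorphic in $U$ and has continuous boundary values from either side, by the Assumption (continuous density) and the continuity of $W(x\pm i0)$. The case of $F$ on $V$ is the same after the rotation $z\mapsto iz$.

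The computations above already show that $B$ is real on every boundary piece:
\begin{itemize}
\item on $\rr_+$ and $i\rr_+$, by the conjugation and evenness symmetries;
\item on both sides of the slit, by the saddle point equation \eqref{bosonic saddle}.
\end{itemize}
For $F$, the analogous statements hold on $\rr$ and on the slits $\pm\frac12+i\Sigma$.

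\textbf{Step 2: behaviour at infinity and orientation.} Using $W(z)=1/z+\mu_2/z^3+O(z^{-5})$ (oddness kills even moments), we get
\begin{align*}
B(z)=g_2z^2+i\bigl[W(z+i/2)-W(z-i/2)\bigr]=g_2z^2+\frac{1}{z^2}+O(z^{-4}).
\end{align*}
So $B\sim g_2 z^2$ at infinity. The quarter plane $U$ is opened to the upper half plane, and $\infty\mapsto\infty$ with the correct orientation.

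Similarly, $F(z)=g_2z+i\bigl[W(iz+i/2)+W(iz-i/2)\bigr]=g_2 z+\frac{2}{z}+O(z^{-3})$. Thus $F\sim g_2 z$ maps a neighbourhood of $\infty$ in $V$ (an upper half-plane region) to one in $\hh$.

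\textbf{Step 3: monotonicity along the boundary.} This is the main obstacle. We need that the real boundary values of $B$ traverse $\rr$ exactly once, monotonically, as we go around $\partial U$. The boundary is traversed in order: $+\infty$ along $i\rr_+$ down to $i/2$, out along the top side of the slit to its tip $\frac i2+L$, back along the bottom side to $i/2$, down to $0$, then along $\rr_+$ to $+\infty$.

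First I would compute $B'$ on each piece using the integral representation of $W$ and positivity of $\rho$.
\begin{itemize}
\item On $\rr_+$, $B'(x)=2g_2x+i\bigl[W'(x+i/2)-W'(x-i/2)\bigr]=2g_2x-2\,\mathrm{Im}\,W'(x+i/2)$. One checks $\mathrm{Im}\,W'(x+i/2)=\int\rho(y)\,\mathrm{Im}(x-y+i/2)^{-2}dy$.
\item On $i\rr_+$ a similar expression holds.
\end{itemize}
These signs are not obviously definite. So I would use a fallback that avoids sign-checking.

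The fallback is to count preimages via the argument principle. For $w\in\hh$, consider the number of zeros of $B-w$ in $U$. Take the boundary of $U$ truncated at radius $R$, closed by the arc on which $B\approx g_2z^2$. The winding number of $B-w$ around this contour equals $1$ plus the number of times the real boundary values reverse direction. This extra term must vanish, since a reversal would create a critical point of the boundary data. Such a critical point, combined with the Schwarz reflection available from the saddle point equation (which glues the two sides of the slit), would give an interior critical point in the reflected domain. I expect to rule this out via the Herglotz-type positivity: $\mathrm{Im}\,W<0$ in the upper half plane. Concretely, I would try to show that $B$ maps $U$ into $\hh$ by showing $\mathrm{Im}\,B>0$ in $U$ by the maximum principle.
\begin{itemize}
\item $\mathrm{Im}\,B$ is harmonic in $U$.
\item It vanishes on $\partial U$.
\item It is positive near infinity because $\mathrm{Im}\,g_2z^2=2g_2xy>0$ in the quarter plane.
\end{itemize}
Hence $\mathrm{Im}\,B\geq0$, and by the maximum principle $\mathrm{Im}\,B>0$ in $U$.

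Then $B:U\to\hh$ is a proper holomorphic map. Its degree is read off at $\infty$, where $B\sim g_2z^2$ on a quarter plane has degree $1$. A proper map of degree $1$ is a biholomorphism. The same maximum-principle argument with $F\sim g_2z$ on $V$, whose boundary is mapped into $\rr$ by Step 1, gives $F:V\to\hh$ biholomorphic.
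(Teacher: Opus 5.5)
Your final argument is correct. Its first half is the paper's: real boundary values on $\partial U$ (both sides of the slit), $B\sim g_2z^2$ at infinity, and $\mathrm{Im}\,B>0$ on $U$ by the maximum principle. Your ``positive near infinity'' should really be $\mathrm{Im}\,B>-\varepsilon$ on $|z|=R$ for large $R$, as in the paper, since $2g_2xy$ degenerates along the axes; that is all the maximum principle needs.

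You differ from the paper in how you get univalence:
\begin{itemize}
\item The paper shows $B(\partial U)=\rr$ and conjugates by Riemann maps $\phi:U\to\dd$, $\psi:\hh\to\dd$, extended to the boundary by Carath\'eodory. It then applies the argument principle to $\alpha=\psi\circ B\circ\phi^{-1}$, asserting that $\alpha(\partial\dd)$ winds once.
\item You note that real boundary limits plus $B\to\infty$ make $B:U\to\hh$ proper, and you compute its degree at infinity.
\end{itemize}
Your route needs neither the Riemann mapping theorem nor Carath\'eodory, which helps for slit domains like $U$ and $V$. These are not Jordan domains, so $\phi$ does not literally extend continuously to $\overline U$. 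The two sides of the slit, and the two corners at $i/2$ whose images the paper calls $x_2,x_4$, are distinct prime ends. Properness handles this two-sided boundary automatically.

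What you still owe is one line behind ``degree read off at $\infty$''. For $w=it$ with $t$ large, all preimages lie where $|z|\gg 1$, because $B$ is bounded on the rest of $\overline U$. Rouch\'e, comparing $B-w$ with $g_2z^2-w$ on $U\cap\{R_0<|z|<R_1\}$, then gives exactly one preimage; note that $|g_2z^2-w|\ge t$ on the axes. The same works for $F$ on $V$, with $g_2z$ in place of $g_2z^2$.

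One claim in your Step 3 is wrong and should be removed. The winding number of $B(\partial U_R)$ about $w\in\hh$ is not $1$ plus the number of reversals of the real boundary values. For real $x$, the point $x-w$ lies in the lower half-plane, so its argument stays in $(-\pi,0)$. Hence any path along $\rr$ contributes an angle change determined by its endpoints alone, and the winding about $w$ is exactly $1$ however the boundary values oscillate. Monotonicity on $\partial U$ is therefore not an obstacle; it follows afterwards from univalence. This observation is also what justifies the winding-number claim the paper makes for $\alpha(\partial\dd)$.
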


\begin{proof}
	We only need to show that $B$ and $F$ are bijective, since any bijective holomorphic map is a biholomorphism. We make the argument for $B$ only, as the argument with $F$ goes through with superficial modifications. \\
	
	We have already seen that on the boundary of $U$, $B(z)$ is real, hence $B(\partial U)\subset \rr$. Moreover, since $B$ is holomorphic on $U$, the real and imaginary parts are harmonic functions on $U$. As $|z|\to\infty$, we know $W(z)\to 0$, and therefore the $g_2z^2=g_2(x^2-y^2)+ig_2(2xy)$ term dominates for $z\in U$ large, implying that $v\coloneqq\im{B}>0$ for at least some points in $U$. Consider the domains $U_R$ given by $U\cap B(0;R)$, and $R$ large enough to avoid the cut at $i/2+\Sigma$. Let $A_R$ denote the boundary arc $U\cap\overline{B(0,R)}$. By the maximum principle for harmonic functions, the minimum/maximum of $v$ must be achieved on $\partial U_R$. For any $\varepsilon>0$, choose $R$ such that $|\text{Im}( i[W(z+i/2)-W(z-i/2)])|<\varepsilon$ on $\partial B(0,R)$, so that $v(z) > 2g_2xy - \varepsilon > - \varepsilon$ on the boundary arc $A_R$. Since $v\equiv 0$ along the rest of the boundary (real and imaginary axes, as well as the cut $i/2 + \Sigma$), we find that on $U$, we must have $v(z)>0$, hence $B$ maps $U$ into $\hh$.  \\
	
	Next, we note that for $x,y\in (0,\infty)$, as $x,y\to \infty$ we have $B(x) \to \infty$ and $B(iy) \to -\infty$. Since $\partial U$ is connected, and $B$ is continuous along the boundary, $B(\partial U)\subset \rr$ is also connected which means that $B(\partial U) = \rr$. Let $\dd\subset \cc$ be the unit disk. We now view all these domains $U,\hh,\dd$ as subsets of the Riemann sphere, $\hat{\cc}$. The Riemann mapping theorem gives us bioholomorphisms $\phi: U \to \dd$ and $\psi: \hh\to \dd$. Furthermore, by the Caratheodory-Tohorst theorem (see, e.g. \cite{milnor1990dynamicscomplexvariable} Sections 15 and 16), these biholomorphisms extend to continuous functions along the boundaries, i.e. $\phi:\overline{U}\to\overline{\dd}$ and $\psi: \overline{\hh}\to\overline{\dd}$ are continuous. Define now $\alpha:\dd\to \dd$ through $\alpha= \psi\circ B\circ \phi^{-1}$. We find that $\alpha(\partial \dd) = \partial \dd$, and this curve has a winding number of $1$ around any point inside the disk. Thus, for any $z_0 \in D$, choose $R>|z_0|$ and through the homotopy invariance of winding numbers, we have
	\[ 1 = \frac{1}{2\pi i}\oint_{\alpha(\partial \dd)} \frac{1}{z-z_0}\dif z = \frac{1}{2\pi i}  \int_{\alpha(S^1_R)} \frac{\dif z}{z-z_0} \]
	where $S^1_R$ is a circle of radius $R$ enclosing $z_0$. Therefore
	\[ 1 = \frac{1}{2\pi i} \oint_{\alpha(S^1_R)}\frac{\dif z}{z-z_0} = \frac{1}{2\pi i}\oint_{S^1_R}\frac{(\alpha(z)-z_0)'}{\alpha(z)-z_0}\]
	which then the argument principle shows that $\alpha(z)-z_0$ has exactly one zero on $\dd$ (since it has no poles by assumption), hence is bijective, and therefore so is $B(z)$, proving that $B(z)$ is a biholomorphism. 
\end{proof}

The Schwarz-Christoffel integral defines a biholomorphism $f_{SC}:\hh\to D$ where $\hh=\{z\in\cc:\im{z}>0\}$ is the upper half plane, and $D$ is any polygonal domain through the following procedure:
\begin{itemize}
	\item There are points $x_1>\dots>x_n\in \rr\cup\{\infty\}$ which are mapped to the vertices $\{\zeta_i\}_{i=1}^n$ of the polygonal domain, $f_{SC}(x_i)=\zeta_i$, for $i=1,2,\dots,n$ (points at infinity are allowed). The vertices are to be ordered by starting at $\zeta_1$ and following the polygon counterclockwise.
	\item To each vertex $\zeta_i$, an angle $\pi\alpha_i$ is associated. If $\zeta_i\neq \infty$ then the angle is simply the angle at the vertex, and if $\zeta_i = \infty$ the angle is given by minus the angle between the two vectors which lie along the segments connecting the point at $\infty$. 
	\item The map $f_{SC}$ is given by the Schwarz-Christoffel integral, assuming all $x_i\neq \infty$
	\begin{align*}
		f_{SC}(z) = A\int_{0}^z (t-x_1)^{\alpha_1-1}(t-x_2)^{\alpha_2-1}\cdots (t-x_n)^{\alpha_n-1} \dif t + C
	\end{align*}
	with $A,C$ constants. If $x_i=\infty$, the factor $(z-x_i)^{\alpha_i-1}$ is omitted from the above equation. 
\end{itemize}
The Riemann mapping theorem guarantees the existence of a biholomorphism between any two simply connected domains $D\neq \cc$ and $D'\neq \cc$. Moreover, such a biholomorphism $f:D\to D'$ can be uniquely determined by the values it takes for 3 points on the boundary of $D$. Therefore, if we take the Schwarz-Christoffel map $f_{SC}:\hh \to U$ such that $f_{SC}(x_2)=\frac{i}{2}, f_{SC}(x_3)=\frac{i}{2}+L$ and $f_{SC}(x_4)=\frac{i}{2}$, we have $f_{SC}=B^{-1}$ (and so $f_{SC}(x_1)=0$ and $f_{SC}(\infty)=\infty$). 

As $z$ has been used to denote points in $U$ or $V$, we will write $z=z(\zeta)=f_{SC}(\zeta)$. The associated angles are $\pi\alpha_1 = \pi\alpha_2 = \pi\alpha_4=\pi/2$ and $\pi\alpha_3=2\pi$ (the angle at $\infty$ is $-\pi\alpha_5=\pi/2$ but is not needed since the associated factor will be omitted). The same procedure applies to $F$, and we summarize the data in the tables below: 

\begin{table}[h]
\begin{center}
	\begin{tabular}{|l|l|l|}
		\hline
		$\zeta=B(z)$ & Vertex $z\in U$ & Angle $\pi\alpha_i$ \\
		\hline
		$\infty$ & $\infty$ & $\pi\cdot \frac{1}{2}$ \\
		$x_1$ & $0$ & $\pi\cdot \frac{1}{2}$ \\
		$x_2$ & $i/2$ & $\pi\cdot \frac{1}{2}$ \\
		$x_3$ & $i/2+L$ & $\pi \cdot 2$\\
		$x_4$ & $i/2$ & $\pi\cdot\frac{1}{2}$\\
		\hline
	\end{tabular}\quad		\begin{tabular}{|l|l|l|}
		\hline
		$\zeta=F(z)$ & Vertex $z\in V$ & Angle $\pi\alpha_i$ \\
		\hline
		$\infty$ & $\infty$ & $\pi\cdot 1$ \\
		$y_1$ & $1/2$ & $\pi\cdot \frac{1}{2}$ \\
		$y_2$ & $1/2+iL$ & $\pi\cdot 2$ \\
		$y_3$ & $1/2$ & $\pi \cdot \frac{1}{2}$\\
		$y_4$ & $-1/2$ & $\pi\cdot\frac{1}{2}$\\
		$y_5$ & $-1/2+iL$ & $\pi\cdot 2$\\
		$y_6$ & $-1/2$ & $\pi\cdot\frac{1}{2}$\\
		\hline
	\end{tabular}
\end{center}
\end{table}

Thus, $B(z)=\zeta$ may be inverted to obtain
\begin{align}
	z &= A\int_0^\zeta (t-x_1)^{-1/2}(t-x_2)^{-1/2}(t-x_3)(t-x_4)^{-1/2} \dif t + C \nonumber \\
	&= A\int_{x_1}^\zeta \frac{(t-x_3)\dif t}{\sqrt{(t-x_1)(t-x_2)(t-x_4)}} \label{scintegral}
\end{align}
where we used $z(x_1)=0$ to determine $C$. \\

We can also use the conditions $z(x_2) = z(x_4)=i/2$ and $z(x_3) = i/2 + L$ to derive additional constraints:
\begin{align}
	z(x_2) = \frac{i}{2} &= A\int_{x_1}^{x_2} \frac{\dif t (t-x_3)}{\sqrt{-(x_1-t)(t-x_2)(t-x_4)}}\nonumber \\
	 \frac{1}{2} &= A\int_{x_2}^{x_1} \frac{\dif t (t-x_3)}{\sqrt{(x_1-t)(t-x_2)(t-x_4)}} \label{c1} \\
	z(x_3) = \frac{i}{2}+L &= A\int_{x_1}^{x_3} \frac{\dif t (t-x_3)}{\sqrt{(x_1-t)(x_2-t)(t-x_4)}} \nonumber\\
	z(x_4) = \frac{i}{2} &= A\int_{x_1}^{x_4} \frac{\dif t (t-x_3)}{\sqrt{(x_1-t)(x_2-t)(t-x_4)}}. \nonumber
\end{align}
The last two constraints combine to give
\begin{align}
	z(x_3)-z(x_4) = L = A\int_{x_4}^{x_3} \frac{\dif t (t-x_3)}{\sqrt{(x_1-t)(x_2-t)(t-x_4)}} \label{c2}
\end{align}
while the first two give
\begin{align}
	z(x_2)-z(x_3)= -L  = A\int_{x_3}^{x_2} \frac{\dif t (t-x_3)}{\sqrt{(t-x_1)(t-x_2)(t-x_4)}}. \label{c3}
\end{align}

Finally, inverting $F(z) = \zeta$ yields
\begin{align}
	z &= A\int_0^\zeta (t-y_1)^{-\frac{1}{2}}(t-y_2)^1(t-y_3)^{-\frac{1}{2}}(t-y_y)^{-\frac{1}{2}}(t-y_5)^1 + C\nonumber\\
	&= A\int_0^\zeta \frac{(t-y_2)(t-y_5)\dif t}{\sqrt{(t-y_1)(t-y_3)(t-y_4)(t-y_6)}} + C.\label{fermionfn}
\end{align}
We also get constraints from the table, $y_i = F(z_i)$, and we can subtract $(i+1)$th from the $i$th constraint to get the following 5 equations:
\begin{align*}
	L &= A\int_{y_2}^{y_1}\frac{(t-y_2)(t-y_5)\dif t}{\sqrt{(y_1-t)(t-y_3)(t-y_4)(t-y_6)}} \\
	-L &= A\int_{y_3}^{y_2} \frac{(t-y_2)(t-y_5)\dif t}{\sqrt{(y_1-t)(t-y_3)(t-y_4)(t-y_6)}}\\
	1 &= A\int_{y_4}^{y_3} \frac{(t-y_2)(t-y_5)\dif t}{\sqrt{(y_1-t)(y_3-t)(t-y_4)(t-y_6)}}\\
	L &= A\int_{y_5}^{y_4} \frac{(t-y_2)(t-y_5)\dif t}{\sqrt{(y_1-t)(y_3-t)(y_4-t)(t-y_6)}}\\
	-L &= A\int_{y_6}^{y_5} \frac{(t-y_2)(t-y_5)\dif t}{\sqrt{(y_1-t)(y_3-t)(y_4-t)(t-y_6)}}.
\end{align*}
We can further sum the first two of the above equations to obtain
\begin{align}
	0 &= \int_{y_3}^{y_1} \frac{(t-y_2)(t-y_5)\dif t}{\sqrt{(y_1-t)(t-y_3)(t-y_4)(t-y_6)}} \label{fermionconstraint1}
\end{align}
as well as the last two,
\begin{align}
	0 &= \int_{y_6}^{y_5} \frac{(t-y_2)(t-y_5)\dif t}{\sqrt{(y_1-t)(y_3-t)(y_4-t)(t-y_6)}}. \label{fermionconstraint2}
\end{align}
In addition, we are left with the third equation still,
\begin{align}
	1 &= A\int_{y_4}^{y_3} \frac{(t-y_2)(t-y_5)\dif t}{\sqrt{(y_1-t)(y_3-t)(t-y_4)(t-y_6)}}. \label{fermionconstraint3}
\end{align}

A general elliptic integral is an integral of the form
\begin{align*}
	\int R(t,y)\dif t
\end{align*}
where $R$ is rational in $t$ and $y$, $y=y(t)$ is a function such that $y^2$ is cubic or quartic in $t$ with non-repeating zeroes. E.g. in $\ref{fermionconstraint3}$, 
\begin{align*}
	y = \sqrt{(y_1-t)(y_3-t)(t-y_4)(t-y_6)} \qquad R(t,y) = \frac{(t-y_2)(t-y_5)}{y}.
\end{align*}
All of the constraints expressed here are therefore general elliptic integrals, where $y(t)$ is cubic for the bosonic case and quartic for the fermionic case. Such elliptic integrals and their properties have been studied in detail, and we can in principle reduce these constraints down to equations involving elliptic integrals in their standard form, depending on some parameter(s).

\subsection{Asymptotics and Lagrange Inversion}\label{bosonanalysis}

Given the integral representation of $z(\zeta)$ we may deduce the asymptotics as $\zeta\to\infty$. Start by differentiating $z$,
\begin{align*}
	z'(\zeta) &= A(\zeta -x_3)\big[(\zeta-x_1)(\zeta-x_2)(\zeta-x_4)\big]^{-\frac{1}{2}} \nonumber\\
	&= \frac{A(\zeta-x_3)}{\zeta^{\frac{3}{2}}}\frac{1}{(1-x_1/\zeta)^{\frac{1}{2}}}\frac{1}{(1-x_2/\zeta)^{\frac{1}{2}}}\frac{1}{(1-x_4/\zeta)^{\frac{1}{2}}} \nonumber\\
	&\sim \frac{A(\zeta-x_3)}{\zeta^{\frac{3}{2}}}\bigg[ \sum_{k_1=0}^\infty {-\frac{1}{2} \choose k_1} (-x_1\zeta^{-1})^{k_1} \bigg]\bigg[ \sum_{k_2=0}^\infty {-\frac{1}{2} \choose k_2} (-x_2\zeta^{-1})^{k_2} \bigg]\nonumber\\&\qquad \cdot\bigg[ \sum_{k_4=0}^\infty {-\frac{1}{2} \choose k_4} (-x_4\zeta^{-1})^{k_4} \bigg]\nonumber \\
	&= A(\zeta^{-\frac{1}{2}}-x_3\zeta^{-\frac{3}{2}})\bigg[  \sum_{k=0}^\infty (-1)^k \zeta^{-k}\underbrace{\sum_{p+q+r = k} {-\frac{1}{2} \choose p}{-\frac{1}{2} \choose q}{-\frac{1}{2} \choose r} x_1^px_2^qx_4^r}_{\coloneqq \gamma_k}\bigg] \nonumber\\
	&= A\sum_{k=0}^\infty (-1)^k \gamma_k (\zeta^{-\frac{1}{2}-k}-x_3\zeta^{-\frac{3}{2}-k}) = A\bigg[ \zeta^{-\frac{1}{2}} + \sum_{k=1}^\infty (-1)^{k} (\gamma_{k}+x_3\gamma_{k-1})\zeta^{-\frac{1}{2}-k} \bigg],
\end{align*}
therefore, we have the asymptotics for $z$
\begin{align}
	z(\zeta)&\sim 2A \bigg[a_0 +  \zeta^{\frac{1}{2}} + \sum_{k=1}^\infty \underbrace{ \frac{(-1)^{k+1}(\gamma_k + x_3\gamma_{k-1}) }{2k-1}}_{\coloneqq a_k} \zeta^{\frac{1}{2}-k}\bigg]. \label{integral expansion}
\end{align}

On the other hand, we know the asymptotics of 
\[\zeta(z)=B(z) = g_2z^2 + i [W(z+\frac{i}{2})-W(z-\frac{i}{2})]\] 
through the expansion of $W$ at $\infty$, i.e. 
\[  W(z) = \sum_{k=0}^\infty \frac{\mu_{2k}}{z^{2k+1}}.  \]
Therefore, through Lagrange inversion, we will again arrive at an expansion for $z(\zeta)$, enabling us to relate the moments $\mu_{2k}$ with the parameters $x_i$ and their constraints. As our primary interest is in finding an expression for $\mu_2$, it suffices to consider the expansion to $\mcal{O}(z^{-4})$. We proceed along these lines, first by expanding $W(z\pm \frac{i}{2})$,
\begin{align*}
	W(z\pm \frac{i}{2}) &= \sum_{k=0}^\infty \frac{\mu_{2k}}{(z\pm \frac{i}{2})^{2k+1}} = \frac{1}{z}\frac{1}{1\pm \frac{i}{2z}}\sum_{k=0}^\infty \frac{\mu_{2k}}{z^{2k}(1\pm \frac{i}{2z})^{2k}}.
\end{align*}
We use the expansion
\[ \frac{1}{1\pm \frac{i}{2z}} \sim  1 - \bigg( \pm \frac{i}{2z} \bigg)  + \bigg( \pm\frac{i}{2z}  \bigg)^2 - \bigg( \pm\frac{i}{2z} \bigg)^3 + \mcal{O}(z^{-4}) \]
to find
\begin{align}
	W(z\pm \frac{i}{2}) &\sim \frac{1}{z}\bigg[1 -  \bigg(\pm \frac{i}{2z}\bigg) + \cdots \bigg]\sum_{k=0}^\infty \frac{\mu_{2k}}{z^{2k}}\bigg[ 1 - \bigg( \pm\frac{i}{2z} \bigg) + \cdots \bigg]^{2k} \nonumber\\
	&= \bigg[\frac{1}{z}- \bigg(\pm \frac{i}{2z^2}\bigg) - \frac{1}{4z^3} + \bigg( \pm\frac{i}{8z^4}\bigg) \bigg]\bigg[ 1 + \frac{\mu_2}{z^2} \mp \frac{i\mu_2}{z^3} \bigg]+ \mcal{O}(z^{-5})\nonumber\\
	&= \frac{1}{z} \mp \bigg( \frac{i}{2}\bigg)\frac{1}{z^2} + \bigg(\mu_2- \frac{1}{4}\bigg)\frac{1}{z^3} \pm\frac{i}{2} \bigg( \frac{1}{4} - 3\mu_2  \bigg)\frac{1}{z^4} +\mcal{O}(z^{-5})\label{Wexpansion}
\end{align}
hence, setting $\delta \coloneqq 3\mu_2 - \frac{1}{4}$ as in \cite{kazakov1999d}, we have
\begin{align*}
	B(z)=g_2z^2 + i\bigg[W(z+\frac{i}{2})-W(z-\frac{i}{2})\bigg] \sim g_2z^2 + \frac{1}{z^2} + \frac{\delta}{z^4} + \mcal{O}(z^{-6}).
\end{align*}

We now recall the Lagrange inversion theorem. Given a holomorphic function $f$ with series expansion of the form
\[ f(z) = \sum_{j=1}^\infty f_j z^j \]
such that $f_1\neq 0$, then there exists a (local) inverse $g$, i.e. $f(g(z))=z$, and the series expansion coefficients are related through
\[ k[z^k]g^n = n[z^{-n}]f^{-k} \]
where $[z^k]f$ denotes the $k$th coefficient of the series expansion, $f_k = [z^k]f$. \\

To apply this theorem to $B(z)$, set $w=z^2$ and consider the function $1/B(w)$:
\begin{align*}
	\frac{1}{B(w)} &\sim \frac{1}{g_2 w} \bigg(  \frac{1}{1 + \frac{1}{g_2}[ w^{-2} + \delta w^{-3} + \mcal{O}(w^{-4}) ]} \bigg) \\
	&\sim \frac{1}{g_2 w} - \frac{1}{g_2^2 w^3} + \frac{\delta}{g_2^2 w^4} + \mcal{O}(w^{-5})
\end{align*}
and now $f(w) \coloneqq \frac{1}{B(1/w)}$ is of the necessary form to apply Lagrange inversion,
\[ f(w) = \frac{w}{g_2} - \frac{w^3}{g_2^2} + \frac{\delta w^4}{g_2^2} + \mcal{O}(w^{5}). \] 
Let $g$ be the inverse of $f$, which has coefficients
\[ [w^k]g = \frac{1}{k}[w^{-1}]f^{-k} = \frac{1}{k}\res_{w=0} B(1/w)^k \]
which we may easily compute for $k=1,2,3,4$,
\begin{align*}
	[w]g &= \res_{w=0}\bigg[ \frac{g_2}{w} + w + \delta w^2 + \cdots \bigg] = g_2 \\
	[w^2]g &= \frac{1}{2}\res_{w=0} \bigg[ \bigg( \frac{g_2}{w} + w + \delta w^2 +\cdots\bigg)^2 \bigg]  = 0 \\
	[w^3]g &= \frac{1}{3}\res_{w=0}\bigg[ \bigg( \frac{g_2}{w} + w + \delta w^2 + \cdots \bigg)^3 \bigg] = g_2^2 \\
	[w^4]g &= \frac{1}{4}\res_{w=0}\bigg[\bigg( \frac{g_2}{w} + w + \delta w^2 + \cdots \bigg)^4\bigg] = \delta g_2^3.
\end{align*}
Thus 
\[ g(w) \sim g_2 w + g_2^2 w^3 + g_2^3\delta w^4 + \mcal{O}(w^5) \]
and since $z^2 = w = g(f(w)) = g(1/B(1/w))$, we have 
\[\frac{1}{z^2} = \frac{1}{w} = g(f(1/w)) = g(1/B(w))=g(1/\zeta)\]
and so
\begin{align}
	z^2 &\sim \frac{1}{\frac{g_2}{\zeta} + \frac{g_2^2}{\zeta^3} + \frac{g_2^3\delta}{\zeta^4} + \mcal{O}(\zeta^{-5})}  \nonumber\\
	&\sim \frac{\zeta}{g_2} \bigg( 1 + \frac{g_2}{\zeta^2} + \frac{g_2^2\delta}{\zeta^3} + \mcal{O}(\zeta^{-4}) \bigg)^{-1} \nonumber\\
	z &\sim \frac{\zeta^{\frac{1}{2}}}{\sqrt{g_2}}\bigg(  1 + \frac{g_2}{\zeta^2} + \frac{g_2^2\delta}{\zeta^3} + \mcal{O}(\zeta^{-4}) \bigg)^{-\frac{1}{2}} \nonumber\\
	&\sim \frac{1}{\sqrt{g_2}}\zeta^{\frac{1}{2}} - \frac{\sqrt{g_2}}{2}\zeta^{-\frac{3}{2}} - \frac{\sqrt{g_2^3}\delta}{2}\zeta^{-\frac{5}{2}} + \mcal{O}(\zeta^{-\frac{7}{2}}). \label{bosoninversion}
\end{align}
Note that we took the positive root since as $z\to \infty$, $\zeta\to\infty$ as well. We may now compare the above Lagrange inversion \ref{bosoninversion} and the asymptotic expansion \ref{integral expansion} to obtain the following formulae:
\begin{align}
	\mcal{O}(\zeta^{\frac{1}{2}}):\qquad&  \boxed{A = \frac{1}{2\sqrt{g_2}}} \nonumber\\
	\mcal{O}(\zeta^{0}):\qquad& \boxed{a_0 = 0} \nonumber\\
	\mcal{O}(\zeta^{-\frac{1}{2}}):\qquad&\boxed{2x_3  = x_1+x_2+x_4} \label{order-1/2} \\
	\mcal{O}(\zeta^{-\frac{3}{2}}):\qquad& \boxed{6g_2 =  x_1^2+x_2^2+x_4^2 - 2x_3^2} \label{order-3/2} \\
	\mcal{O}(\zeta^{-\frac{5}{2}}):\qquad &\boxed{\delta = -2\frac{\gamma_3 + x_3\gamma_2}{5g_2^2}} \label{order-5/2}
\end{align}
We postpone the full simplification of \ref{order-5/2} for the time being. 
In the above calculations we have used the generalization of ${r\choose k}$, 
\[ {r \choose k} = \frac{(r)_k}{k!} = \frac{1}{k!}\frac{\Gamma(r+1)}{\Gamma(r-k+1)} = \frac{1}{k!}r(r-1)\cdots(r-k+1). \]
Furthermore, the condition that $a_0 = 0$ can be directly verified by studying $z(\zeta)$ as $\zeta\to\infty$; from \ref{integral expansion} we see that 
\begin{align*}
	\lim_{\zeta\to\infty} \bigg( z(\zeta) - 2A\zeta^{\frac{1}{2}} \bigg) = 2Aa_0
\end{align*}
but from \ref{scintegral} we find
\begin{align*}
	z(\zeta) - 2A\zeta^{\frac{1}{2}} &= A\int_{x_1}^\zeta \frac{\dif t (t-x_3)}{\sqrt{(t-x_1)(t-x_2)(t-x_4)}} - A\int_0^\zeta \frac{\dif t}{\sqrt{t}} \\
	&= A\int_{x_1}^\zeta \frac{\dif t (t-x_3)}{\sqrt{(t-x_1)(t-x_2)(t-x_4)}} - A\int_{x_1}^{\zeta+x_1} \frac{\dif t}{\sqrt{t-x_1}} \\
	2Aa_0 &= \lim_{\zeta\to\infty}\bigg(z(\zeta) - 2A\zeta^{\frac{1}{2}} \bigg)\\ &= A\int_{x_1}^\infty \dif t \bigg(\frac{(t-x_3)}{\sqrt{(t-x_1)(t-x_2)(t-x_4)}} - \frac{1}{\sqrt{t-x_1}} \bigg).
\end{align*}
This contour integral is shown to be $0$ in \cite{hoppe1989quantum}. 

\subsection{Solution in terms of Elliptic Integrals}

We now introduce the (complete) elliptic integrals of the first and second kind, $K(\alpha)$ and $E(\alpha)$ respectively:
\begin{align*}
	K(\alpha) \coloneqq \int_0^{\frac{\pi}{2}} \frac{\dif\theta}{\sqrt{1-\alpha\sin^2\theta}} \qquad E(\alpha)\coloneqq \int_0^{\frac{\pi}{2}} \dif\theta \sqrt{1-\alpha\sin^2\theta} \qquad \vartheta(\alpha) \coloneqq \frac{K(\alpha)}{E(\alpha)}.
\end{align*}
For a reference on elliptic integrals, see for example \cite{lawden_elliptic}. Using the notation and structure of the argument in \cite{kazakov1999d} we define
\begin{align*}
	\lambda_i = \frac{x_i}{x_1-x_4} \qquad\qquad
	\alpha = \frac{x_2-x_4}{x_1-x_4} \qquad\qquad \alpha' = 1 - \frac{1}{\alpha} = \frac{x_2-x_1}{x_2-x_4}.
\end{align*}
Recall that $x_1 > x_2 > x_3 > x_4$, hence $0 < \alpha < 1$. The goal is to express $g_2$ and any quantities of interest (such as $\mu_2$), purely in terms of $\alpha$; if this is done, then fixing $\alpha$ is equivalent to fixing $g_2$, and consequently solves the model. \\

To this end, we note some basic properties of $K$ and $E$. The elliptic integrals at $\alpha$ and $\alpha'$ are related,
\begin{align*}
	K(\alpha') = \sqrt{\alpha}K(1-\alpha) \qquad \qquad E(\alpha') = \frac{1}{\sqrt{\alpha}}E(1-\alpha).
\end{align*}
Indeed,
\begin{align*}
	K(\alpha') &= \int_0^{\frac{\pi}{2}} \frac{\dif\theta}{\frac{1}{\sqrt{\alpha}}\sqrt{\alpha-(\alpha-1)\sin^2\theta}} \qquad\text{Sub: } t = \sin^2\theta \\
	&= \sqrt{\alpha}\int_0^1 \frac{\dif t}{2\sqrt{t(1-t)(\alpha-(\alpha-1)t)}} \qquad \text{Sub: } s = 1-t \\
	&= \sqrt{\alpha}\int_0^1 \frac{\dif s}{2\sqrt{(1-s)s(1-(1-\alpha)s)}} \qquad \text{Sub: } s = \sin^2\theta \\
	&= \sqrt{\alpha}K(1-\alpha)
\end{align*}
and an analogous argument holds for $E$. We also recall \textit{Legendre's relation}, which we do not prove here:
\[ E(\alpha)K(1-\alpha) + E(1-\alpha)K(\alpha)-K(\alpha)K(1-\alpha) = \frac{\pi}{2}. \]

We now rewrite the constraints \ref{c1},\ref{c2} and \ref{c3} in terms of elliptic integrals. Sum \ref{c2} and \ref{c3} to get
\begin{align}
	0 &= \int_{x_4}^{x_2} \frac{\dif t (t-x_3)}{\sqrt{(x_1-t)(x_2 - t)(t-x_4)}} \nonumber \\
	&= \int_0^{x_2-x_4} \frac{\dif t (x_3-x_4-t)}{\sqrt{(x_2-x_4-t)(x_1-x_4-t)t}} \qquad \text{Sub: } t' = \frac{t}{x_2-x_4} \nonumber \\
	&= \int_0^1 \frac{\dif t' [(x_3-x_4 - (x_1-x_4)\alpha t') ]}{\sqrt{(1-t')(1-\alpha t')t'}} \qquad\text{Sub: } t' = \sin^2\theta \nonumber \\
	&= \int_0^{\frac{\pi}{2}} \frac{\dif\theta [x_3-x_4 - (x_1-x_4)\alpha\sin^2\theta]}{\sqrt{1-\alpha\sin^2\theta}} \nonumber \\
	&= (x_3-x_1)K(\alpha) + (x_1-x_4)E(\alpha) \nonumber\\
	&= \lambda_2 K(\alpha) - K(\alpha) + 2E(\alpha) \nonumber\\
	\lambda_2 &= 1-2\vartheta(\alpha). \label{ellipticeq2}
\end{align}

Starting with \ref{c1}, we have
\begin{align}
	\sqrt{g_2} &= \int_{x_2}^{x_1} \frac{\dif t (t-x_3)}{(x_1-t)(t-x_2)(t-x_4)} \nonumber \\
	&= \int_0^{x_1-x_2}  \frac{\dif t(t+x_2-x_3)}{\sqrt{t(x_1-x_2-t)(t+x_2-x_4)}} \qquad \text{Sub: } t' = \frac{t}{x_1-x_2}\nonumber\\
	&= \int_0^1 \frac{\dif t' [(x_1-x_2)t'+(x_2-x_3)] }{\sqrt{t'(1-t')(x_2-x_4)(1-\alpha't')}} \qquad \text{Sub: } t' = \sin^2\theta \nonumber\\
	&= \frac{2}{\sqrt{x_2-x_4}}\int_0^{\frac{\pi}{2}} \frac{\dif\theta[(x_1-x_2)\sin^2\theta + (x_2-x_3)] }{\sqrt{1-\alpha'\sin^2\theta}} \nonumber\\
	&= 2\sqrt{x_2-x_4} E(\alpha') + \frac{2(x_4-x_3)}{\sqrt{x_2-x_4}}K(\alpha') \nonumber\\
	& \sqrt{g_2(x_2-x_4)} = 2(x_2-x_4)E(\alpha') - (x_1 + x_2 - x_4)K(\alpha'). \label{ellipticeq1}
\end{align}
Making use of Legendre's relation as well as $K(\alpha') = \sqrt{\alpha}K(1-\alpha)$ and $E(\alpha')=\frac{1}{\sqrt{\alpha}}E(1-\alpha)$, we further simplify \ref{ellipticeq1},
\begin{align*}
	\sqrt{g_2(x_2-x_4)}K(\alpha) &= \frac{2(x_2-x_4)}{\sqrt{\alpha}}E(1-\alpha)K(\alpha) - (x_1 + x_2 - x_4)\sqrt{\alpha}K(\alpha)K(1-\alpha) \\
	\sqrt{\frac{\alpha g_2}{x_2-x_4}}K(\alpha) &= 2E(1-\alpha)K(\alpha) - \frac{(x_1+x_2-x_4)}{x_1-x_4}K(\alpha)K(1-\alpha) \\
	\sqrt{\frac{g_2}{x_1-x_4}}K(\alpha)&= 2\bigg( \frac{\pi}{2} +K(\alpha)K(1-\alpha)-E(\alpha)K(1-\alpha) \bigg)\\&\qquad- (\lambda_2+1)K(\alpha)K(1-\alpha),
\end{align*}
and using $E(\alpha) = \frac{1}{2}(1-\lambda_2)K(\alpha)$ from \ref{ellipticeq2}, we get
\begin{align}
	\sqrt{\frac{g_2}{x_1-x_4}}K(\alpha) &= \pi +K(1-\alpha)(2K(\alpha)-(1-\lambda_2)K(\alpha))\nonumber\\&\qquad- (\lambda_2+1)K(\alpha)K(1-\alpha) = \pi \nonumber\\
	(x_1-x_4) &= \frac{g_2K(\alpha)^2}{\pi^2}.\label{ellipticeq3}
\end{align}

We now combine \ref{order-1/2} and \ref{order-3/2} in order to eliminate $(x_1-x_4)$ and solve for $g_2$ purely in terms of $\alpha$,
\begin{align*}
	12g_2 &= 2(x_1^2 + x_2^2 + x_4)^2 - (x_1+x_2+x_4)^2 = (x_1^2 + x_2^2 + x_4^2) - 2(x_1x_2 + x_1x_4 + x_2x_4) \\
	  &= (x_1-x_4)^2 + x_2^2 - 2x_1x_2 - 2x_2x_4 
	= (x_1-x_4)^2\bigg( 1 + \lambda_2^2 -2\lambda_1\lambda_2 - 2\lambda_2\lambda_4 \bigg).
\end{align*}
Now, note that
\begin{align*}
	\lambda_1 &= \frac{x_1}{x_1-x_4} = \frac{x_1-x_4}{x_1-x_4} - \frac{x_2 - x_4}{x_1-x_4} + \frac{x_2}{x_1-x_4} = 1 - \alpha + \lambda_2 \\
	\lambda_4 &= \frac{x_4}{x_1-x_4} = \frac{x_4-x_2}{x_1-x_4} + \frac{x_2}{x_1-x_4} = \lambda_2 - \alpha.
\end{align*}
Therefore, together with $\lambda_2 = 1-2\vartheta(\alpha)$ from \ref{ellipticeq2}, we are able to solve for $g_2$,
\begin{align*}
	6g_2 &= (x_1-x_4)^2\bigg(1 +\lambda_2(\lambda_2 - 2(1-\alpha+\lambda_2)-2(\lambda_2-\alpha))\bigg)= 2(x_1-x_4)^2\bigg(-3\vartheta^2+2(2-\alpha)\vartheta -(1-\alpha)\bigg) 
\end{align*}
where we have written $\vartheta\equiv\vartheta(\alpha)$. Finally, we combine the above with \ref{ellipticeq3}, again writing $K\equiv K(\alpha)$, 
\begin{align}
	\boxed{\frac{1}{g_2} = \frac{K^4}{3\pi^4}\bigg(-3\vartheta^2 + 2(2-\alpha)\vartheta - (1-\alpha)\bigg)} \label{g2solution}
\end{align}
hence solving for $g_2$ in terms of $\alpha$. \\

One can then solve for moments by equating higher order coefficients in the two expansions of $z(\zeta)$. We now return to \ref{order-5/2} and solve for $\mu_2$. Recalling that $\delta = 3\mu_2 - \frac{1}{4}$, equation \ref{order-5/2} reads
\begin{align*}
	\mu_2 &= \frac{1}{12} - \frac{2}{5}\frac{\gamma_3+x_3\gamma_2}{g_2^2}.
\end{align*}
We can express $\gamma_3 + x_3\gamma_2$ in terms of the following symmetric polynomials in $x_1,x_2,x_4$,
\begin{align*}
	P &= x_1x_2x_4 \\
	Q &= x_1^2x_2 + x_1^2x_4 + x_1x_2^2 + x_2^2x_4 + x_1x_4^2 + x_2x_4^2 \\
	R &= x_1^3 + x_2^3 + x_4^3
\end{align*}
so that 
\begin{align*}
	\gamma_3 &= {-\frac{1}{2}\choose 1}^3 P + {-\frac{1}{2} \choose 2}{-\frac{1}{2}\choose 1}Q + {-\frac{1}{2}\choose 3} R \\
	&= -\frac{1}{16}(2P+3Q+5R) \\
	x_3\gamma_2 &= \frac{1}{2}(x_1+x_2+x_4)\bigg( {-\frac{1}{2}\choose 1}^2(x_1x_2+x_1x_4 + x_2x_4) +{-\frac{1}{2}\choose 2}(x_1^2 + x_2^2 +x_4^2) \bigg) \\
	&= \frac{1}{16}( 6P + 5Q + 3R ).
\end{align*}
Therefore we have
\begin{align}
	\mu_2 = \frac{1}{12} - \frac{2}{5g_2^2}\frac{1}{16}(4P + 2Q - 2R) = \frac{1}{12}+\frac{1}{20g_2^2}(R-Q-2P) \label{mu2pqr}
\end{align}
By writing 
\[ R-Q-2P = (x_1-x_4)^3 (R'-Q'-2P') \]
where $P',Q',R'$ are the corresponding symmetric polynomials in $\lambda_1,\lambda_2,\lambda_4$, recalling that $\lambda_i = x_i/(x_1-x_4)$. By using the relationships 
\[ \lambda_1 = 1-\alpha+\lambda_2 \qquad\qquad \lambda_4 = \lambda_2 - \alpha \]
we can express each $P',Q',R'$ purely in terms of $\lambda_2$. Doing the algebra, we find
\begin{align*}
	P' &= \lambda_2^3 + (1-2\alpha)\lambda_2^2 + (\alpha^2-\alpha)\lambda_2\\
	Q' &= 6\lambda_2^3 + (6-12\alpha)\lambda_2^2 + (8\alpha^2 - 8\alpha + 2)\lambda_2 + (-2\alpha^3 + 3\alpha^2-\alpha) \\
	R' &= 3\lambda_2^3 + (-6\alpha+3)\lambda_2^2 +(6\alpha^2-6\alpha+3)\lambda_2 + (-2\alpha^3 + 3\alpha^2 - 3\alpha + 1) 
\end{align*}
therefore we find
\begin{align*}
	R'-Q'-2P' = -5\lambda_2^3 + (10\alpha-5)\lambda_2^2 + (-4\alpha^2+4\alpha+1)\lambda_2 + (-2\alpha+1).
\end{align*}
By using $\lambda_2 = 1-2\vartheta(\alpha)$, expanding, and expressing in terms of $\vartheta(\alpha)\equiv\vartheta$, we have
\begin{align*}
	R'-Q'-2P' &= 40\vartheta^3 + (40\alpha-80)\vartheta^2 + (8\alpha^2 - 48\alpha + 48)\vartheta + (-4\alpha^2+12\alpha-8).
\end{align*}
Finally, returning to \ref{mu2pqr}, we can express $\mu_2$ purely in terms of $\vartheta$, by using equations \ref{ellipticeq3} and \ref{g2solution},
\begin{align*}
	\mu_2 &= \frac{1}{12} + \frac{(x_1-x_4)^3}{20g_2^2}(R'-Q'-2P') \\
	&= \frac{1}{12} + \frac{g_2^3 K^6}{20g_2^2\pi^6}(R'-Q'-2P') \\
	&= \frac{1}{12} + \frac{K^6}{20\pi^6}\frac{3\pi^4}{K(\alpha)^4(-3\vartheta^2+(4-2\alpha)\vartheta+(\alpha-1))}(R'-Q'-2P') 
\end{align*}
and finally we arrive at the solution for $\mu_2$ in terms of $\alpha$,
\begin{align}
	\boxed{\mu_2 = \frac{1}{12} - \frac{3K^2}{5\pi^2}\frac{10\vartheta^3 + (10\alpha-20)\vartheta^2+(2\alpha^2-12\alpha+12)\vartheta -(\alpha^2-3\alpha+2)}{3\vartheta^2-(4-2\alpha)\vartheta-(\alpha-1))}} \label{mu2solution}
\end{align}
which agrees with \cite{kazakov1999d} (except for the typo which does not include the factor of 3).

\subsection{Analysis of the Fermionic Function}\label{fermionanalysis}

We proceed along the same lines as in \ref{bosonanalysis}, but this time we start with \ref{fermionfn}. We have
\begin{align*}
	z' &= \frac{A(\zeta - y_2)(\zeta - y_5)}{\zeta^2} \frac{1}{(1-y_1/\zeta)^\frac{1}{2}}\frac{1}{(1-y_3/\zeta)^\frac{1}{2}}\frac{1}{(1-y_4/\zeta)^{\frac{1}{2}}}\frac{1}{(1-y_6\zeta)^{\frac{1}{2}}} \\
	&= A(1- (y_2+y_5)\zeta^{-1}+y_2y_5\zeta^{-2})\prod_{k_1,k_3,k_4,k_6} \sum_{k_i=0}^\infty {-\frac{1}{2}\choose k_i}(-x_i\zeta^{-1})^{k_i} \\
	&= A(1- (y_2+y_5)\zeta^{-1}+y_2y_5\zeta^{-2})\sum_{k=0}^\infty (-1)^k\zeta^{-k}\\&\qquad \cdot \underbrace{\sum_{p+q+r+s=k} {-\frac{1}{2}\choose p}{-\frac{1}{2}\choose q}{-\frac{1}{2}\choose r}{-\frac{1}{2}\choose s}y_1^py_3^qy_4^ry_6^s }_{\eqqcolon \gamma_k}\\
	&= A\sum_{k=0}^\infty (-1)^k\gamma_k(\zeta^{-k}-(y_2+y_5)\zeta^{-k-1}+y_2y_5\zeta^{-k-2}) \\
	&= A\bigg( 1 - (\gamma_1 + y_2 + y_5)\zeta^{-1} +\sum_{k=2}^\infty (-1)^k(\gamma_k + (y_2+y_5)\gamma_{k-1} +y_2y_5\gamma_{k-2})\zeta^{-k}\bigg).
\end{align*}
Integration yields
\begin{align}
	z&\sim A\bigg(\zeta + a_0 - (\gamma_1+y_2+y_5)\ln \zeta +\sum_{k=1}^\infty \frac{(-1)^{k}}{k}(\gamma_{k+1} + (y_2+y_5)\gamma_{k} +y_2y_5\gamma_{k-1})\zeta^{-k}  \bigg).\label{fermionz}
\end{align}

We now compute the asymptotics of $F(z)=g_2z+i[W(iz+i/2)+W(iz-i/2)]$, first by using \ref{Wexpansion} to find
\begin{align*}
	W(iz\pm i/2) &= -i\frac{1}{z} \pm \bigg(\frac{i}{2}\bigg)\frac{1}{z^2} + i \bigg(\mu_2 - \frac{1}{4}\bigg)\frac{1}{z^3} \pm\frac{i}{2}\bigg( \frac{1}{4}-3\mu_2 \bigg)\frac{1}{z^4} + \mcal{O}(z^{-5}).
\end{align*}
Hence we have the following expansion for $F(z)$,
\begin{align}
	F(z) = g_2z +  \frac{2}{z} + \bigg(\frac{1}{2}-2\mu_2\bigg)\frac{1}{z^3} + \mcal{O}(z^{-5}). \label{Fasymptotics}
\end{align}
    Let $\delta=\frac{1}{2}-2\mu_2$. To use the Lagrange inversion, we put
\begin{align*}
	\frac{1}{F(z)} &\sim \frac{1}{g_z z} \bigg( \frac{1}{1 + \frac{2}{g_2 z^2} + \frac{\delta}{g_2z^4} + \cdots}  \bigg) \\
	&\sim \frac{1}{g_2}\frac{1}{z} - \frac{2}{g_2^2}\frac{1}{z^3} + \bigg( \frac{4}{g_2^3}-\frac{\delta}{g_2^2} \bigg)\frac{1}{z^5} + \mcal{O}(z^{-7}),
\end{align*}
and defining $f(w) = 1/F(1/w)$, we find that
\begin{align*}
	f(w) = \frac{w}{g_2} - \frac{2}{g_2^2}w^3 + \bigg( \frac{4}{g_2^3}-\frac{\delta}{g_2^2} \bigg)w^5.
\end{align*}
We can now apply Lagrange inversion to get the coefficients for $g$, the inverse of $f$
\begin{align*}
	[w^k]g = \frac{1}{k} \res_{w=0} F(1/w)^k.
\end{align*} 
Using \ref{Fasymptotics} we have $[w^{2k}]g=0$ and 
\begin{align*}
	[w]g &= g_2 \\ 
	[w^3] &= 2g_2^2\\
	[w^5] &= g_2^4\delta + 8g_2^3
\end{align*}
hence $g(z)\sim g_2z + 2g_2^2 z^3 + (g_2^4 \delta + 8g_2^3)z^5 + \cdots$ and we may now use $z = 1/g(1/F(z))=1/g(1/\zeta)$ to find
\begin{align*}
	z &\sim \frac{\zeta}{g_2}(1+2g_2\zeta^{-2} + (g_2^3\delta + 8g_2^2)+\cdots)^{-1} \\
	&\sim \frac{1}{g_2}\zeta - 2\frac{1}{\zeta} - (4g_2 + \delta g_2^2)\frac{1}{\zeta^3}+\mcal{O}(\zeta^{-5}).
\end{align*}
Therefore, we establish the following equations by comparing with \ref{fermionz},
\begin{align}
	\mcal{O}(\zeta^1):\qquad& \boxed{A = \frac{1}{g_2}} \\
	\mcal{O}(\ln\zeta):\qquad& \boxed{y_2 + y_5=-\gamma_1} \label{fermionlnzeta}\\  
	\mcal{O}(\zeta^0):\qquad& \boxed{a_0=0} \\
	\mcal{O}(\zeta^{-1}):\qquad&\boxed{y_2y_5 = 2 + \gamma_1^2 - \gamma_2 }\label{fermionzeta1/2}\\
	\mcal{O}(\zeta^{-2}):\qquad& \boxed{\gamma_3-2\gamma_1\gamma_2  +\gamma_1^3 + 2\gamma_1 = 0} \\
	\mcal{O}(\zeta^{-3}):\qquad&\boxed{\delta = \frac{1}{3g_2^2}(\gamma_4 - \gamma_3\gamma_1 - \gamma_2^2 + \gamma_2\gamma_1^2 + 2\gamma_2)+\frac{4}{g_2}}
\end{align}

Let us take stock of the situation we are in. We have parameters $y_1>\cdots>y_6$ and $A$, determining the Schwarz-Christoffel integral, $a_0$ is a constant of integration. We want to solve for $\mu_2$, or in principle higher moments if we were to go deeper into the expansion. We also have the integral constraints \ref{fermionconstraint1},\ref{fermionconstraint2},\ref{fermionconstraint3}. Therefore we do have enough equations to solve for $\mu_2$ or higher moments. However, the above equations are significantly more complex than in the Bosonic case, and having $6$ parameters $y_i$ appearing in the constraints \ref{fermionconstraint1},\ref{fermionconstraint2},\ref{fermionconstraint3} also further complicates matters. One can arrange to eliminate, e.g. $y_2$ and $y_5$ from the integral constraints using \ref{fermionlnzeta} and \ref{fermionzeta1/2}, but it still appears that one would need a new elliptic parameter in addition to the analogous parameter for $\alpha$ appearing in the bosonic case. One would then also need complete elliptic integrals of the third kind $\Pi(m,n)$ to fully express these equations in terms of elliptic integrals, and then solve the resulting equations for $\mu_2$. We have not found a way to explicitly carry out this procedure, however we note that the above system of equations is enough to allow numerical analysis of $\mu_2$ and, in principle, higher order moments.

\section*{Summary and outlook}

In this paper we studied type $(0,1)$ random fuzzy geometries coupled to bosonic and fermionic matter as bi-tracial matrix integral models. Such models provide a mathematically rigorous framework for studying quantum fluctuations of Dirac operators and can be viewed as finite-dimensional analogues of path integrals in noncommutative geometry and quantum gravity. The presence of matter fields introduces determinant-type interactions in the partition function, leading to nontrivial modifications of the associated matrix ensembles and their large $N$ behavior. 

Starting from the saddle point equation for the equilibrium measure, we derived the Schwinger--Dyson equations for arbitrary polynomial potentials using complex analytic methods. our approach relied on the construction of an entire periodic function built from the resolvent and its shifted values, whose analytic properties lead directly to recursive relations among the moments of the equilibrium measure. Once derived we provide a systematic perturbative framework for solving the model when either bosonic or fermionic contributions are considered.

In the Gaussian case, we obtained explicit analytic solutions for the free energy and the second moment in terms of elliptic integrals. The bosonic model was shown to be closely related to the Hoppe model and the three-colour matrix model, while the fermionic model gives rise to a parallel analytic structure. The appearance of elliptic functions and conformal mapping techniques highlights deep connections between random fuzzy geometries, matrix models, and complex analysis.

These results provide further evidence that Dirac ensembles form a rich intersection of noncommutative geometry, random matrix theory, and mathematical physics. They also suggest several future directions, including the study of multi-cut solutions, higher-signature fuzzy geometries, topological recursion, and the incorporation of gauge fields and internal degrees of freedom into random noncommutative geometries.

\section*{Acknowledgements}
 The authors would like to thank the Natural Sciences and Engineering Research Council of Canada (NSERC) for financial support.

\printbibliography

\appendix
\end{document}